\newtheorem{lemma}{Lemma}
\newtheorem{theorem}[lemma]{Theorem}
\newtheorem{definition}[lemma]{Definition}
\newtheorem{proposition}[lemma]{Proposition}
\newtheorem{conjecture}{Conjecture}
\numberwithin{lemma}{section}
\theoremstyle{definition}
\newcommand\F{{\mathbb{F}}}
\newcommand\Tr{{\mathop\textup{Tr }}}
\newcommand{\set}[1]{\left\{{#1}\right\}}
\newcommand{\poly}{\mathrm{poly}}
\newcommand{\remove}[1]{}
\renewcommand{\epsilon}{\varepsilon}
\newcommand{\eps}{\epsilon}
\renewcommand{\le}{\leqslant}
\renewcommand{\ge}{\geqslant}
\title{Pseudorandom generators and the BQP vs. PH problem}
\author{Bill Fefferman\thanks{Department of Computing and Mathematical Sciences,
Caltech, Pasadena, CA 91125. Supported by IQI.}
\and Chris Umans\thanks{Department of Computing and Mathematical Sciences,
Caltech, Pasadena, CA 91125. Supported by NSF CCF-0846991.}}
\begin{document}

\maketitle
\setcounter{page}{0}
\thispagestyle{empty}

\begin{abstract}
It is a longstanding open problem to devise an oracle relative to which BQP does not lie in the Polynomial-Time Hierarchy (PH).
We advance a natural conjecture about the capacity of the Nisan-Wigderson pseudorandom generator \cite{NW94} to fool $AC_0$, with {\sc majority} as its hard function. Our conjecture is essentially that the loss due to the hybrid argument (which is a component of the standard proof from \cite{NW94}) can be avoided in this setting. This is a question that has been asked previously in the pseudorandomness literature \cite{BSW03}. We then make three main contributions:
\begin{enumerate}
\item We show that our conjecture implies the existence of an oracle relative to which BQP is not in the PH. This entails giving an explicit construction of unitary matrices, realizable by small quantum circuits, whose row-supports are ``nearly-disjoint.''

\item We give a simple framework (generalizing the setting of Aaronson \cite{A}) in which any efficiently quantumly computable unitary gives rise to a distribution that can be distinguished from the uniform distribution by an efficient quantum algorithm. When applied to the unitaries we construct, this framework yields a problem that can be solved quantumly, and which forms the basis for the desired oracle.

\item We prove that Aaronson's ``GLN conjecture'' \cite{A} implies our conjecture; our conjecture is thus formally easier to prove. The GLN conjecture was recently proved false for depth greater than $2$ \cite{A2010}, but it remains open for depth $2$. If true, the depth-2 version of either conjecture would imply an oracle relative to which BQP is not in AM, which is itself an outstanding open problem.

\end{enumerate}
Taken together, our results have the following interesting interpretation: they give an instantiation of the Nisan-Wigderson generator that can be broken by quantum computers, but not by the relevant modes of classical computation, if our conjecture is true.
\end{abstract}

\newpage

\section{Introduction}

Let $U_t$ denote a random variable uniformly distributed on $t$-bit strings. A {\em pseudorandom generator} (PRG) is a function \[f:\{0,1\}^t \rightarrow \{0,1\}^m\] that stretches a short ``seed'' into a longer output string, with the property that $f(U_t)$ is {\em computationally indistinguishable} from the uniform distribution.

There is a vast literature constructing PRGs that achieve computational indistinguishability against a wide variety of computational models (e.g. small circuits, small nondeterministic circuits, small branching programs, small constant-depth circuits). These constructions are typically ``hardness vs. randomness'' tradeoffs in the sense that they make use of a hard function (either unconditionally hard, or hard conditioned on a complexity assumption), and their proof of correctness takes the form of a reduction that transforms a computationally efficient {\em distinguisher} into an efficient algorithm for the hard function (thereby deriving a contradiction). This transformation entails the use of the {\em hybrid argument} \cite{GM84, Y82} which incurs a loss of a factor $1/m$ in going from a distinguisher (with gap $\epsilon$) to a {\em predictor} (with advantage $\epsilon/m$) and from there to an efficient algorithm (with advantage $\epsilon/m)$.

A question that has been raised in the pseudorandomness literature is whether this loss of a factor of $1/m$ can be avoided (for an explicit framing of this question, and a discussion of its motivation, see \cite{BSW03}). In certain settings, the answer is known to be ``yes'' -- when the notion of ``efficient'' is small PH circuits, or bounded-width branching programs \cite{BSW03}. In the present paper, we identify a setting in which this question has surprising connections to a central unresolved question in quantum complexity: whether there exists an oracle relative to which BQP is not in the PH.

Our setting is a familiar one: we will work with the ubiquitous Nisan-Wigderson PRG \cite{NW94}, against $AC_0$ circuits, with {\sc majority} as its hard function. We need a precise statement for the discussion below, which can be given via two standard definitions:

\begin{definition}[\cite{NW94}]
A set family ${\cal D} = \{S_1, S_2, \ldots, S_m\}$ is an $(\ell, p)$ design if every set in the family has cardinality $\ell$, and for all $i \ne j$, $|S_i \cap S_j| \le p$.
\end{definition}

\begin{definition}[\cite{NW94}]
Given a function $f:\{0,1\}^{\ell} \rightarrow \{0,1\}$ and an $(\ell, p)$ design ${\cal D} = \{S_1, S_2, \ldots, S_m\}$ in a universe of size $t$, the function $NW^{f}_{\cal D}:\set{0,1}^t \rightarrow \{0,1\}^m$ is given by
\[NW^{f}_{\cal D}(x) = \left (f_1(x_{|S_1}), f_2(x_{|S_2}), f_3(x_{|S_3}), \ldots, f_m(x_{|S_m})\right ),\]
where each $f_i$ is the function $f$ with a fixed set of its inputs negated\footnote{The standard setup has each $f_i = f$; we need the additional freedom in this paper for technical reasons. We know of no settings in which this alteration affects the analysis of the NW generator.}, and $x_{|S}$ denotes the projection of $x$ to the coordinates in the set $S$.
\end{definition}

Generally speaking, the function $NW_{\cal D}^{f}$ is a PRG against a class of distinguishers as long as $f$ is hard on average for that class of distinguishers. Recall that the majority function on $\ell$ bits is known to be hard for
$AC_0$: no polynomial-size (or even quasi-polynomial-size), constant-depth circuit can compute majority correctly on more than a
$1/2 + \widetilde{O}(1/\sqrt{\ell})$ fraction of the inputs \cite{S93, Has87}, and this is essentially tight, since the function that simply outputs the first bit of the input is correct on a random input with probability $1/2 + \Theta(1/\sqrt{\ell})$. We make the following quantitative conjecture:

\begin{conjecture}
\label{conj:ours} Let ${\mathcal D} = \{S_1, S_2, \ldots, S_m\}$ be an $(\ell, O(1))$-design in a universe of size $t \le
\poly(\ell)$, with $m \le \poly(\ell)$. Then for every constant-depth circuit of size at most $\exp(\poly \log m)$,
\[|\Pr[C(U_{t+m})=1] - \Pr[C(U_t, NW_{\mathcal D}^{\mbox{\small \sc majority}}(U_t))=1]| \le o(1).\]
\end{conjecture}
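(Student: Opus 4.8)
The plan is to eliminate the factor-$m$ loss that the hybrid argument incurs in the standard analysis of \cite{NW94}. (Conjecture~\ref{conj:ours} is open; what follows is a line of attack, together with an account of where it gets stuck.) Recall the standard analysis: a distinguisher $C$ of advantage $\eps$ yields, for some index $i$ and some fixing of the seed bits outside $S_i$, an $AC_0$ circuit that predicts $\textsc{majority}(x_{|S_i})$ on uniform $x_{|S_i}\in\zo^{\ell}$ with advantage $\eps/m$; here the other output coordinates $f_j(x_{|S_j})$, $j\ne i$, depend on only $|S_i\cap S_j|=O(1)$ bits of $x_{|S_i}$ once the outside bits are fixed, so they can be absorbed into $C$. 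Since quasi-polynomial-size $AC_0$ cannot compute $\textsc{majority}$ on more than a $1/2+\widetilde{O}(1/\sqrt{\ell})$ fraction of inputs \cite{Has87,S93}, this shows only $\eps\le\widetilde{O}(m/\sqrt{\ell})$ --- which is $o(1)$ when $\ell$ is even slightly super-quadratic in $m$, but vacuous in the interesting regime $\ell\lesssim m^2$ (in which $\ell$, $m$ and $t$ are all polynomially related). So the argument must not proceed through a reduction to a single predictor.

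The natural hybrid-free route is a direct spectral argument. Write $C$ as a $\zo$-valued function of $(x,y)\in\zo^{t}\times\zo^{m}$ and set $\mathcal Y=(U_t, NW^{\textsc{majority}}_{\mathcal D}(U_t))$. Then
\[
\Pr[C(\mathcal Y)=1]-\Pr[C(U_{t+m})=1]=\sum_{\emptyset\ne S}\widehat C(S)\,\mathbb{E}_{\mathcal Y}[\chi_S],
\]
and $\widehat C$ is $o(1)$-concentrated on sets of size $k=\poly\log m$ by Linial--Mansour--Nisan, since $C$ has size $\exp(\poly\log m)$. For $S=(A,B)$ with $A\subseteq[t]$ and $B\subseteq[m]$, expanding each $(-1)^{f_i(x_{|S_i})}$ in its Fourier series over $S_i$ gives
\[
\mathbb{E}_{\mathcal Y}[\chi_{A,B}]=\sum_{\substack{T_i\subseteq S_i\ (i\in B)\\ A=\bigoplus_{i\in B}T_i}}\ \prod_{i\in B}\widehat{\textsc{maj}}_i(T_i),
\]
where the $\widehat{\textsc{maj}}_i$ are the (sign-twisted) Fourier coefficients of $\textsc{majority}$, supported on odd-size sets. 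Two features are immediate: the sum is empty --- hence $\mathbb{E}_{\mathcal Y}[\chi_{A,B}]=0$ --- unless $|A|+|B|$ is even, and the $(\ell, O(1))$-design property confines every $T_i$ to lie within the $O(|B|)$ coordinates of $\bigcup_{j\in B\setminus\{i\}}(S_i\cap S_j)$, except where $A$ pins it down. One would then feed these estimates into a one-sided (sandwiching) polynomial approximation of $C$, in the style of Bazzi, Razborov and Braverman, reducing Conjecture~\ref{conj:ours} to a suitable smallness statement for the sums $\sum_{0<|A|+|B|\le k}|\mathbb{E}_{\mathcal Y}[\chi_{A,B}]|$ (equivalently, to showing $\mathcal Y$ is $o(1)$-close to a $k$-wise independent distribution and invoking Braverman's theorem).

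This is where the real obstacle sits, and why the statement is a conjecture rather than a theorem. The low-degree correlations of $\mathcal Y$ are only \emph{polynomially} small: $\mathbb{E}_{\mathcal Y}[\chi_{\{x_a\},\{y_i\}}]=\pm\widehat{\textsc{maj}}(\{a\})=\pm\Theta(1/\sqrt{\ell})$ whenever $a\in S_i$, and there are $\Theta(\ell m)$ such characters, with $\sum_{a\in S_i}\widehat{\textsc{maj}}(\{a\})^2=\Theta(1)$ for each output bit $i$. A blunt union bound or Cauchy--Schwarz over the low-degree spectrum therefore fails outright --- it yields bounds like $\sqrt{m}$, nowhere near $o(1)$ --- and even the $AC_0$ bound $\sum_S |S|\,\widehat C(S)^2\le\poly\log m$ does not rescue it. A proof must instead exploit that $AC_0$ cannot \emph{coherently combine} the many $o(1)$-advantage tests lurking in the spectrum: morally, it cannot count or threshold the $m$ nearly-independent, $\Theta(1/\sqrt{\ell})$-biased events $[\,y_i=x_{a_i}\,]$, and so --- after a random restriction, or via a window argument --- is confined to tests reading only $\poly\log$ of them, on which the accumulated bias is $O(\sqrt{\poly\log\ell}/\sqrt{\ell})=o(1)$. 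Making this precise is essentially the content of a strong form of Aaronson's GLN conjecture \cite{A}, which this paper shows implies Conjecture~\ref{conj:ours}; a direct proof appears to require a genuinely new structural handle on $AC_0$ versus $\textsc{majority}$ that defeats the hybrid loss.
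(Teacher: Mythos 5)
Your proposal correctly treats Conjecture~\ref{conj:ours} as open; the paper gives no proof of it, only the discussion in Section~\ref{sec:discussion}. Your Fourier accounting is accurate: the characters on pairs $\{x_a,y_i\}$ with $a\in S_i$ carry bias $\pm\widehat{\textsc{maj}}(\{a\})=\Theta(1/\sqrt{\ell})$, the level-one weight per output bit is $\Theta(1)$, the parity constraint holds because majority is an odd function, and the design property confines each $T_i$ to $A\cup\bigcup_{j\ne i}(S_i\cap S_j)$ as you say. This is a useful concrete reformulation of the difficulty, and it is complementary to the two directions the paper itself discusses in Section~\ref{sec:discussion} (hard-core sets \`a la Sudan--Trevisan--Vadhan, blocked because majority has no hard core against AC$_0$; and hybrid-free arguments via approximate counting \`a la Barak--Shaltiel--Wigderson, unavailable in AC$_0$).

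Where you misdescribe the situation is in presenting the step ``show $\mathcal Y$ is $o(1)$-close to $k$-wise independent and invoke Braverman'' as the open sub-problem. That route is not merely where one gets stuck; it is provably hopeless, for a reason the paper records at the end of Section~\ref{sec:discussion}. Bazzi/Razborov/Braverman sandwiching is a black-box argument: one produces low-degree reals $p_-\le C\le p_+$ that are $L^1$-close under the uniform distribution and then uses that the target distribution fools low-degree polynomials. But $\mathcal Y$ does \emph{not} fool low-degree polynomials --- Theorem~\ref{thm:alg} together with the Beals et al.\ polynomial method yields a constant-degree real polynomial that distinguishes $\mathcal Y$ from uniform with $\Omega(1)$ advantage, and indeed your own level-two count (some $\Theta(\ell m)$ characters each of bias $\Theta(1/\sqrt{\ell})$, with $L^1$ mass $\Theta(\sqrt{\ell}\,m)$) exhibits the offending spectrum directly. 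So $\mathbb{E}_{\mathcal Y}[p_\pm]$ can sit $\Omega(1)$ away from $\mathbb{E}_U[p_\pm]$ and the sandwich says nothing; this is also exactly why GLN fails at depth $\ge 3$ and why the paper insists any proof must be non-black-box. Your closing sentence --- that a proof must show AC$_0$ cannot coherently aggregate the many $\Theta(1/\sqrt{\ell})$-biased tests --- is the actual content of what is needed, not a refinement of the sandwiching plan; as written, the proposal presents a provably blocked reduction as though it were a promising but unfinished lemma.
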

In this work we abuse notation and refer to constant depth circuits of size at most $\exp(\poly \log m)$ as ``$AC_0$.''

By the standard argument from \cite{NW94, N92}, a distinguishing circuit $C$ with gap $\epsilon$ can be converted to a {\em predictor} with advantage $\epsilon/m$ and then a slightly larger circuit that computes {\sc majority} with success
rate $1/2 + \epsilon/m$. Thus the above statement is true for $m \ll \sqrt{\ell}$; if the $1/m$ loss from the hybrid argument can be avoided (or reduced), it would be true for $m$ as large as $\poly(\ell)$ (and even larger) as we
conjecture is true. In Section \ref{sec:discussion} we discuss intuition supporting this conjecture that relates specifically to the hardness of {\sc majority} for $AC_0$.

This paper contains three main results, which together make Conjecture \ref{conj:ours} interesting and worthy of further study:
\begin{itemize}
\item We show that our conjecture implies the existence of an oracle relative to which BQP is not in the PH, and would thus resolve a major question in quantum complexity. We are encouraged by the fact that our conjecture is recognizable as a natural question in pseudorandomness that has been previously and independently studied (e.g., in \cite{BSW03}).

    The crucial component in showing that our conjecture is sufficient for the existence of an oracle relative to which BQP is not in the PH, is an explicit construction of unitary matrices whose row-supports form an $(\ell, p)$-design. We give such a construction and show how to realize these matrices with small quantum circuits in Section \ref{sec:constructions}. This is the technical core of the paper.

\item We generalize the setting of \cite{A} (which proposed a so-called {\em forrelated} distribution as one that is easy to distinguish from uniform by a quantum computer, but possibly hard for $AC_0$) to a simple framework in which any efficiently quantumly computable unitary $U$ gives rise to a distribution that can be distinguished from uniform by a quantum computer (and Aaronson's setup is recovered by choosing $U$ to be a DFT matrix).

    Together with our construction of explicit unitaries whose row-supports form an $(\ell, p)$-design, this framework has the following interesting interpretation: it gives an instantiation of the Nisan-Wigderson generator that can be broken by quantum computers, but not by the relevant modes of classical computation, if Conjecture \ref{conj:ours} is true.

    Also of independent interest is the fact the unitaries that form the basis of our quantum algorithms don't seem to resemble the DFT matrices for problems in the Hidden Subgroup framework, or even the few other unitaries used in known quantum algorithms. But they possess natural extremal combinatorial (as opposed to algebraic) properties, and we wonder if they can be useful elsewhere in the quantum realm.

\item We show that the ``Nisan-Wigderson'' distribution $(U_t, NW_{\cal D}^{\mbox{\small \sc majority}}(U_t))$ is $\epsilon$-almost $k$-wise independent, in the sense of Aaronson \cite{A}, whose ``GLN conjecture'' asserted that all such distributions fool $AC_0$; a depth-3 counterexample was later found \cite{A2010}. Whether all such distributions fool depth-2 $AC_0$ remains open. A distribution in our general framework (thus efficiently quantumly distinguishable from uniform) that fools depth-2 $AC_0$ would imply an oracle relative to which BQP is not in AM, a weaker (and still unresolved) version of the BQP vs. PH problem. Thus there are two potential routes to resolving this weaker version of the main problem (the depth-2 version of our conjecture, and the depth-2 version of the GLN conjecture); ours is formally easier, and arguably conceptually easier because its connection to the pseudorandomness literature suggests initial lines of attack.
\end{itemize}

Finally, since \cite{A} has shown that the classes $SZK$ and $BPP_{\mbox{path}}$ require exponentially many queries to distinguish $\epsilon$-almost $k$-wise independent distributions from uniform, our constructions {\em unconditionally} yield oracles relative to which BQP does not lie in either of these classes (and $MA$ as well, since $MA \subseteq BPP_{\mbox{path}}$), just as Aaronson's construction does.

\subsection{The BQP vs. PH question}

The quest for an oracle relative to which BQP is not in the PH dates to the foundational papers of the field; the question was first asked by Bernstein and Vazirani \cite{BV93} in the early 1990's. They also gave an oracle problem, {\sc recursive fourier sampling}, that is regarded as a promising candidate (but there have been as yet no real inroads on a potential proof). Currently, oracles are known relative to which BQP is not in MA \cite{W00}, but no relativized worlds are known in which BQP is not in AM. Obtaining an oracle relative to which BQP is not in the PH thus represents a stubborn, longstanding and fundamental problem whose resolution would help clarify the relationship between BQP and classical complexity classes. In recent progress, Aaronson \cite{A} devised a {\em relation} oracle problem that lies in the function version of BQP but not in the function version of the PH, but this still leaves the original problem open. Aaronson's work \cite{A} also has a detailed account of the many motivations for revisiting (and hopefully resolving!) this  problem, and we refer the interested reader to the introduction of \cite{A} for many more details.

In this paper we will find it convenient to speak almost exclusively about the ``scaled down'' version of the problem, which is equivalent via the well-known connection between PH and $AC_0$.  In it, the goal is to design a promise problem (rather than an oracle) that lies in (promise)-BQLOGTIME but not (promise)-$AC_0$
We will drop the cumbersome ``promise'' modifiers when they are clear from context. The class BQLOGTIME is the class of languages decidable by quantum computers that have random access to an $N$-bit input, and use only $O(\log N)$ steps.

\begin{definition}[BQLOGTIME]
A language $L$ is in {\em BQLOGTIME} if it can be decided by a LOGTIME-uniform family of circuits $\{C_n\}$, where each $C_n$ is a quantum circuit on $n$ qubits. On an $(N=2^n)$-bit input $x$, circuit $C_n$ applies $O(\log N)$ gates, with each gate being either a {\em query} gate which applies the map $|i\rangle|z\rangle \mapsto |i\rangle|z \oplus x_i\rangle$, or a standard quantum gate (from a fixed, finite basis).
It is equivalent (by polynomially padding the number of qubits) to allow $\poly\log(N)$ gates.
\end{definition}

Following Aaronson, our goal will be to design, for each input length $N$, a {\em distribution} on $N$-bit strings that can be distinguished from the uniform distribution by a BQLOGTIME predicate, but not by an $AC_0$ circuit. As described in Appendix \ref{app:converting}, such a distribution can be easily converted to a proper oracle $O$ for which $BQP^O \not\subset PH^O$.

\subsection{Techniques}

In this section we briefly discuss the techniques we use for each of the main results listed above.

\paragraph{Showing that our NW distribution is $\epsilon$-almost $k$-wise independent.}

We prove that whenever ${\cal D}$ is an $(\ell, p)$ design in a universe of size $t$, the random variable $(U_t, NW_{\cal D}^{\mbox{\small \sc majority}}(U_t))$ is $O(pk^2/\sqrt{\ell})$-almost $k$-wise independent, for $k < o(\ell^{1/4}p^{-1/2})$. The relevant definition of almost-$k$-wise independence (which we inherit from \cite{A}) appears in Definition \ref{def:almost-k-wise}. Recall that this property of our distribution is the technical basis of the $SZK$ and $BPP_{path}$ results, as well as the connections to the depth-2 GLN conjecture.

This statement amounts to the assertion that after conditioning on the value of up to $k-1$ coordinates, the bias (away from $1/2$) of any specified $k$-th coordinate is at most $O(pk/\sqrt{\ell})$. This is an easy calculation when the conditioned coordinates all lie among the first $t$ coordinates (since the $k$-th coordinate is either completely independent, if it lies among the first $t$ coordinates, or else it is {\sc majority} applied to a subset of $\ell$ of the first $t$ coordinates, of which up to $k-1$ may be fixed). In the actual proof, when some conditioned coordinates lie {\em outside} the first $t$ coordinates (which would otherwise be difficult to analyze), we use the following simple trick to reduce to the easy case: we replace conditioning on coordinate $t+i$ with conditioning on {\em all} of the coordinates in set $S_i$ of the $(\ell, p)$-design (which determine it). Since at most $p$ of these can affect the bias of the $k$-th coordinate, we are back in the easy case with up to $p(k-1)$ bits fixed instead of $(k-1)$.

\paragraph{Showing that our conjecture is sufficient to resolve the BQP vs. PH question.}
In order to show that our conjecture is sufficient to imply an oracle relative to which BQP is not in the PH, we need to discuss the quantum part of the argument. Conjecture \ref{conj:ours} implies that the NW generator with certain parameters fools $AC_0$, which is one part of the overall argument. The other part is to exhibit a BQLOGTIME algorithm that ``breaks'' this instantiation of the NW generator. Generalizing \cite{A}, our quantum algorithm\footnote{We ignore normalization factors in this discussion.} will receive a random string $x \in \{+1,-1\}^t$ (which should be thought of as the input to the NW generator) as the first half of its input, and as the second half of its input, {\em either}
\begin{enumerate}
\item  a second random string in $\{+1, -1\}^t$, {\em or}

\item a string containing the {\em signs} of a unitary $U$ (with entries in $\{0,1,-1\}$) applied to $x$.
\end{enumerate}
The algorithm distinguishes the two cases (roughly) by querying $x$ into the phases, applying $U$, multiplying the second string into the phases, and measuring in the Hadamard basis.

Note that in case (2), each coordinate of the second string is the sign of a $+1/-1$ weighted sum of certain coordinates of $x$; i.e., it computes {\sc majority} (with a fixed pattern of inputs negated) on this subset of the coordinate of $x$. Thus, if we can construct a unitary $U$ whose row-supports form an $(\ell, p)$ design ${\cal D}$ in a universe of size $t$, then case (2) will be the distribution $(U_t, NW_{\cal D}^{\mbox{\small \sc majority}}(U_t))$, and case (1) will be the uniform distribution. The parameters of this instantiation of the NW generator will be such that Conjecture \ref{conj:ours} implies that it fools $AC_0$. Our task becomes to construct such a unitary $U$.

Note that it is {\em not} possible to simply take an existing $(\ell, p)$ design (random, or other explicit constructions that appear in the literature \cite{NW94, HR03}) and attach $+/-$ signs to the elements of the sets so as to make their characteristic vectors pairwise orthogonal, which is what is needed for them to come from the rows of a unitary $U$. On the other hand we have a different setting of the parameters in mind than usual: we want $p$ to be unusually small (a constant), but the number of sets in the design is also unusually small (only $\poly(\ell)$ instead of $\exp(\ell)$). For these parameters we manage to obtain the required $(\ell, p)$ design using a geometric construction, in which the sets are the characteristic vectors of pairs of lines in an affine plane. The strong symmetries in this construction allow us to assign $+/-$ signs to the elements of each set to achieve pairwise orthogonality of their characteristic vectors. In fact these set systems have only $t/2$ (rather than $t$) sets in them, so the resulting unitaries will have the required properties only among half of their rows, but a small modification of the distribution given to the quantum algorithm in case (2) above can handle this without difficulty.

In Section \ref{sec:local-decomposition} we give a {\em local decomposition} (see Section \ref{sec:preliminaries} for the formal definition) of these unitaries, which is necessary to have an {\em efficient} quantum algorithm. This is the most technically involved part of the paper. We also describe a modification of our construction that is {\em extremal} in the sense that it optimizes all relevant parameters simultaneously: {\em all} rows of the unitary participate, we have $p \le 2$, and $t \le \ell^2$. This is not required for our results, but it is aesthetically pleasing. We have been unable to find a local decomposition that would enable us to actually use this construction as the basis of an efficient quantum algorithm, and we leave finding such a decomposition as an intriguing open problem.

\section{NW distributions are $\epsilon$-almost $k$-wise independent}

Aaronson \cite{A} used the following definition of $\epsilon$-almost $k$-wise independence in order to formulate his ``Generalized Linial-Nisan'' (GLN) conjecture.

\begin{definition}
\label{def:almost-k-wise}
A random variable $D$ distributed on $\{0,1\}^r$ is {\em $\epsilon$-almost $k$-wise independent} if for every $k$ distinct indices $i_1, i_2, \ldots, i_k \in [r]$, and every $\alpha_1, \alpha_2, \ldots, \alpha_k \in \{0,1\}$ we have:
\[1 - \epsilon \le \frac{\Pr[D_{i_1} = \alpha_1 \land D_{i_2} = \alpha_2 \land \cdots \land D_{i_k} = \alpha_k]}{2^{-k}} \le 1 + \epsilon.\]
\end{definition}

The following is the GLN conjecture, which is now known to be false for depth 3 and higher \cite{A2010}, but remains open for depth 2:
\begin{conjecture}[\cite{A}]
Let $D$ be any random variable distributed on $\{0,1\}^r$ that is $1/r^{\Omega(1)}$-almost $r^{\Omega(1)}$-wise independent\footnote{One might expect to see $k = \poly\log(r)$ independence rather than $k = r^{\Omega(1)}$, in analogy with the Linial-Nisan conjecture. Aaronson uses the stronger parameter setting (making the GLN conjecture easier) because it is sufficient for his construction; it is for ours too.}. Then for every constant-depth circuit $C$ of size at most $m = 2^{r^{o(1)}}$,
\[|\Pr[C(D)=1] - \Pr[C(U_r)=1]| \le o(1).\]
\label{conj:Scott}
\end{conjecture}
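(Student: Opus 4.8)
\bigskip\noindent\textbf{A proof strategy for Conjecture~\ref{conj:Scott}.}
The natural plan is to approximate the circuit $C$ by a low-degree real polynomial with \emph{small Fourier $L_1$ weight}, and then exploit that almost-$k$-wise independence makes all low-degree Fourier coefficients of $D$ tiny. First, unwind the hypothesis: if $D$ is $\epsilon$-almost $k$-wise independent, then for every nonempty $S \subseteq [r]$ with $|S| \le k$ we have $|\widehat{D}(S)| \le \epsilon$. Indeed, writing $\widehat D(S) = \mathbb{E}\big[\prod_{i \in S}(-1)^{D_i}\big] = \sum_{\alpha \in \zo^{S}} (-1)^{|\alpha|}\Pr[D_{|S} = \alpha]$, each of the $2^{|S|}$ terms is within $\epsilon\,2^{-|S|}$ of its uniform value, and the uniform value of the whole sum is $0$ (as $S \ne \emptyset$), so the deviation is at most $\epsilon$; meanwhile $\widehat D(\emptyset) = 1$. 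Consequently, for any real polynomial $p$ of degree at most $k$, one gets $|\mathbb{E}_D[p] - \mathbb{E}_{U_r}[p]| \le \epsilon \cdot \|\widehat p\|_1$, where $\|\widehat p\|_1 = \sum_S |\widehat p(S)|$.

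Second, invoke a sandwiching-approximation theorem for $AC_0$ in the style of Razborov--Smolensky and Braverman (with Tal's quantitative improvements): for a depth-$d$, size-$m$ circuit $C$ and error parameter $\delta$, there are polynomials $p_\ell \le \mathbf{1}[C] \le p_u$ on $\zo^r$ with $\mathbb{E}_{U_r}[p_u - p_\ell] \le \delta$ and $\deg p_\ell, \deg p_u \le (\log(m/\delta))^{O(d)}$. Taking $\delta = \delta(r) \to 0$ slowly and using $m = 2^{r^{o(1)}}$, this degree is $r^{o(1)}$, comfortably below the independence parameter $k = r^{\Omega(1)}$. Combining with the first step yields
\[
|\Pr[C(D)=1] - \Pr[C(U_r)=1]| \le \delta + \epsilon\cdot\max\{\|\widehat{p_\ell}\|_1,\ \|\widehat{p_u}\|_1\},
\]
and choosing parameters so the right side is $o(1)$ would finish the proof.

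The whole argument therefore collapses onto one \textbf{main obstacle}: producing sandwiching polynomials for $AC_0$ that are \emph{simultaneously} of degree $r^{o(1)}$ \emph{and} of Fourier $L_1$ weight $o(1/\epsilon)$. Because $\epsilon = 1/r^{\Omega(1)}$ is only inverse-polynomially small, the naive counting bound $\|\widehat p\|_1 \le r^{\deg p}$ — and even the Parseval bound $r^{\deg p/2}$ — gives only $2^{r^{o(1)}}$, which is hopelessly too large; one genuinely needs near-$\poly(m)$ sparsity or $L_1$ control of the approximator, a Mansour-type phenomenon. This is precisely the delicate point. For depth $2$ there is real hope: width-$w$ DNFs/CNFs have tame Fourier structure via the switching lemma, and it is plausible that a suitable truncation/mollification yields an approximator of degree $\poly\log(m/\delta)$ with $L_1$ weight only $\poly(m)$ — or, bypassing polynomials entirely, that one can bound $|\mathbb{E}_D[C]-\mathbb{E}_{U_r}[C]|$ directly by analyzing random restrictions sampled from $D$. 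For depth $\ge 3$, however, no such $L_1$ bound can exist: the depth-$3$ distinguisher of \cite{A2010} shows the statement as written is simply false there, so the honest outcome of this plan is that it can be carried through at best in the depth-$2$ regime that the paper flags as open.
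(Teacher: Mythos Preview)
The statement you are addressing is a \emph{conjecture}, not a theorem: the paper states it, attributes it to \cite{A}, and immediately notes that it is known to be \emph{false} for depth $\ge 3$ by \cite{A2010}, with only the depth-$2$ case remaining open. The paper makes no attempt to prove it, so there is no proof to compare your attempt against.

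That said, your write-up is not really a proof attempt either; it is an honest strategy sketch, and as such it is technically sound. Your Fourier-coefficient bound $|\widehat D(S)|\le \epsilon$ for $0<|S|\le k$ is correct under the multiplicative almost-$k$-wise definition used here, and the reduction to controlling $\epsilon\cdot\|\widehat p\|_1$ for a sandwiching approximator is the right framing. You also put your finger exactly on the obstruction: Braverman-style sandwiching delivers degree $(\log(m/\delta))^{O(d)}$ but gives no useful bound on $\|\widehat p\|_1$, and the trivial $r^{\deg p}$ bound is fatal when $\epsilon$ is only $1/r^{\Omega(1)}$. Your closing observation---that the depth-$3$ counterexample of \cite{A2010} shows no such $L_1$-controlled approximator can exist in general, so at best this line could succeed at depth $2$---is precisely the state of affairs the paper describes. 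In short: there is no gap in your reasoning because you do not claim to have closed the problem; you have accurately diagnosed why it is open.
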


We now show that certain instantiations of the NW generator, including the ones in our Conjecture \ref{conj:ours}, are $\epsilon$-almost $k$-wise independent, with parameters such that the GLN conjecture implies ours.

\begin{theorem}
Let ${\cal D} = \{S_1, S_2, \ldots, S_m\}$ be an $(\ell, p)$ design in a universe of size $t$. Then for every $k < o(\ell^{1/4}p^{-1/2})$, the jointly distributed random variable \[(C,D) = (U_t, NW_{\cal D}^{\mbox{\small \sc majority}}(U_t))\] is $O(pk^2/\sqrt{\ell})$-almost $k$-wise independent.
\label{thm:conj->conj}
\end{theorem}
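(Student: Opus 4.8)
\medskip
\noindent\emph{Proof idea.} The plan is to reduce the statement to a bound on the conditional bias of a single coordinate, and then to control that bias by combining the design property with a standard anticoncentration estimate for {\sc majority}. Fix distinct indices $i_1,\dots,i_k$ and values $\alpha_1,\dots,\alpha_k$, and apply the chain rule
\[
\Pr[D_{i_1}=\alpha_1 \wedge \cdots \wedge D_{i_k}=\alpha_k] \;=\; \prod_{j=1}^{k} \Pr[\, D_{i_j}=\alpha_j \mid D_{i_1}=\alpha_1 \wedge \cdots \wedge D_{i_{j-1}}=\alpha_{j-1} \,],
\]
but with the $k$ coordinates \emph{reordered so that every ``input'' coordinate (index in $[t]$) precedes every ``output'' coordinate (index in $\{t+1,\dots,t+m\}$)}; this is legitimate since the left-hand side is symmetric. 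If each factor lies in $\tfrac12(1\pm\delta)$ with $\delta = O(pk/\sqrt{\ell})$, then the ratio of the left-hand side to $2^{-k}$ lies in $(1\pm\delta)^k \subseteq 1 \pm O(k\delta) = 1 \pm O(pk^2/\sqrt{\ell})$, which is exactly the claimed $O(pk^2/\sqrt{\ell})$-almost $k$-wise independence; the hypothesis $k = o(\ell^{1/4}p^{-1/2})$ is precisely what makes $k\delta = o(1)$, so that $(1\pm\delta)^k = 1\pm O(k\delta)$ is valid, and it forces each factor to be strictly between $0$ and $1$, so that by a trivial induction every conditioning event in the product has positive probability and the chain rule applies legitimately.

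The reordering is what makes each factor easy to bound. If $i_j \in [t]$ is an input coordinate, then by construction all preceding conditioned coordinates are also input coordinates, $D_{i_j}$ is a fresh uniform bit independent of them, and the factor is \emph{exactly} $\tfrac12$. If $i_j = t+i^{\ast}$ is an output coordinate, then $D_{i_j}$ is {\sc majority} (with some fixed pattern of inputs negated, which is harmless throughout) applied to the $\ell$ input bits indexed by $S_{i^{\ast}}$. Let $E$ denote the conditioning event; it constrains some directly-conditioned input bits and some output coordinates $t+j'$, and each such $t+j'$ is a deterministic function of the input bits in $S_{j'}$. Letting $T$ be the union of those sets $S_{j'}$ together with the directly-conditioned input bits, the event $E$ is a deterministic function of $x_{|T}$, and hence $\Pr[D_{i_j}=\alpha_j \mid E]$ is a convex combination, over assignments to $T$ consistent with $E$, of the quantities $\Pr[D_{i_j}=\alpha_j \mid x_{|T}]$ --- in which the only relevant feature of $x_{|T}$ is its restriction to $S_{i^{\ast}}$. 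By the design property $|S_{i^{\ast}}\cap S_{j'}|\le p$, there are at most $k-1$ conditioned output coordinates, and at most $k-1$ directly-conditioned input bits, so $s := |S_{i^{\ast}}\cap T| \le p(k-1)+(k-1) = O(pk)$; in the stated range of $k$ this is $o(\sqrt{\ell})$, in particular $s \le \ell/2$ for large $\ell$.

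Thus each factor for an output coordinate equals a convex combination of probabilities of the form ``$\,\Pr[\,{\sc majority}$ of $\ell$ uniform $\pm 1$ bits takes a prescribed value $\mid$ some $s$ of the bits are fixed arbitrarily$\,]$'', so it suffices to bound the bias of such a partially-fixed {\sc majority} by $O((s+1)/\sqrt{\ell})$. Writing its value as the sign of $c+\Sigma$, where $c$ is a fixed integer with $|c|\le s$ and $\Sigma$ is a sum of $\ell-s$ independent $\pm1$'s (take $\ell$ odd, say; the even case differs by $O(1/\sqrt{\ell})$ and is no worse), symmetry of $\Sigma$ gives that the bias equals $\tfrac12\Pr[\,|\Sigma|\le |c|\,]$ up to parity conventions, which is at most $(s+1)\max_v\Pr[\Sigma=v] = O((s+1)/\sqrt{\ell-s}) = O((s+1)/\sqrt{\ell})$ by the standard estimate $\binom{n}{\lceil n/2\rceil}2^{-n}=O(1/\sqrt{n})$. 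Averaging over $x_{|T}$ preserves the bound, so every factor lies in $\tfrac12(1\pm O((s+1)/\sqrt{\ell})) \subseteq \tfrac12(1\pm O(pk/\sqrt{\ell}))$, which is the input needed in the first paragraph. I expect the only delicate point --- bookkeeping rather than anything deep --- to be the convex-combination step: one must check that refining the conditioning on an output coordinate $t+j'$ to its full defining set $S_{j'}$ is sound (it is, because passing to a convex combination only helps), and that one must \emph{not} do the same refinement when the coordinate being bounded is itself an input bit, since that would fix the very bit in question --- which is exactly why ordering the input coordinates first is essential rather than cosmetic. Everything else is routine calculation.
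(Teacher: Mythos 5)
Your proof is correct and follows essentially the same route as the paper's: chain rule with input coordinates ordered before output coordinates, replacement of conditioning on output bits by conditioning on the input bits in their defining sets, the design property to bound $|S_{i^*}\cap T|=O(pk)$, and anticoncentration of \textsc{majority} via $\binom{\ell}{\lceil\ell/2\rceil}2^{-\ell}=O(1/\sqrt{\ell})$. The only difference is one of rigor rather than strategy: your explicit convex-combination step spells out precisely why refining the conditioning is sound, where the paper simply asserts that the replacement is ``permissible because $D_{j_v}$ can affect $D_{j_w}$ only through the common elements of their sets.''
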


\begin{proof}
Fix $k_1$ distinct indices $i_1, i_2, \ldots, i_{k_1} \in [t]$ and $k_2$ distinct indices $j_1, j_2, \ldots, j_{k_2} \in [m]$ with $k_1 + k_2 \le k$, and fix $\alpha_1, \alpha_2, \ldots, \alpha_{k_1}, \beta_1, \beta_2, \ldots, \beta_{k_2} \in \{0,1\}$.

We compute the probability
\[\rho = \Pr[{C}_{i_1} = \alpha_1 \land {C}_{i_2} = \alpha_2 \land \cdots \land {C}_{i_{k_1}} = \alpha_{k_1} \land {D}_{j_1} = \beta_1 \land {D}_{j_2} = \beta_2 \land \cdots \land {D}_{j_{k_2}} = \beta_{k_2}],\]
which we write as
\begin{eqnarray*}
\rho & = & \left (\prod_{w=1}^{k_1}\Pr[{C}_{i_w} = \alpha_w | {C}_{i_{1}} = \alpha_{1} \land C_{2} = \alpha_{2} \land \cdots \land C_{i_{w-1}} = \alpha_{i_{w-1}}]\right )\\
& \times & \left (\prod_{w=1}^{k_2}\Pr[D_{j_w} = \beta_j | {C}_{i_{1}} = \alpha_{1} \land C_{2} = \alpha_{2} \land \cdots \land C_{i_{k_1}} = \alpha_{i_{k_1}} \right . \\
& & \hspace{1.5in} \left . \land D_{j_1} = \beta_{j_1} \land D_{j_2} = \beta_{j_2} \land \cdots \land D_{j_{w-1}} = \beta_{w-1}]\right).
\end{eqnarray*}
Clearly the first $k_1$ terms of the product are exactly $1/2$, since $C$ is uniform on $t$-bit strings. Now, consider the $w$-th factor, denoted $\rho_w$, in the second part of the product. The key maneuver is to replace the conditioning on $D_{j_v}$ (for $v < w$)  with conditioning on $D_s$ for $s \in S_w \cap S_v$. This is permissible because $D_{j_v}$ can affect $D_{j_w}$ only through the common elements of their associated sets $S_v$ and $S_w$. Note that because $|S_w \cap S_v| \le p$, the total number of coordinates that are being conditioned upon is $\le pk$.

Recall that $|S_w| = \ell$, and that the bit $D_w$ is the majority (with certain inputs negated) of the specified $\ell$ coordinates of $C$.  Without conditioning, we could compute $\Pr[D_w = 1]$ by
\[\frac{1}{2^\ell} \cdot \sum_{r = \lceil \ell/2\rceil}^{\ell} {\ell \choose r}.\]
We want to compute instead $\rho_w$, which is the same probability conditioned on up to $pk$ of the coordinates of $C$. The maximum value of $\rho_w$ is thus
\[\rho_w \le \frac{1}{2^\ell} \cdot \sum_{r = \lceil \ell/2\rceil - pk}^{\ell} {\ell \choose r}.\]
A simple calculation using Stirling's Approximation shows that ${\ell \choose r} \le O(\frac{2^\ell}{\sqrt{\ell}})$ for all $r$, so we obtain the upper bound of
\[\rho_w \le \frac{1}{2} + O(pk/\sqrt{\ell}).\]
A symmetric argument shows that
\[\rho_w \ge \frac{1}{2} - O(pk/\sqrt{\ell}).\]
Thus we conclude (using that $k < o(\sqrt{\ell}/(pk))$):
\[\rho \le \left (1/2 + O(pk/\sqrt{\ell})\right )^k \le \left [(1/2)\left (1 + O(pk/\sqrt{\ell})\right )\right ]^k\le 2^{-k}\left (1 + O(pk^2/\sqrt{\ell})\right ),\]
and
\[\rho \ge \left (1/2 - O(pk/\sqrt{\ell})\right )^k \ge \left [(1/2)\left (1 - O(pk/\sqrt{\ell})\right )\right ]^k \ge  2^{-k}\left (1 - O(pk^2/\sqrt{\ell})\right ),\]
as required.
\end{proof}

\section{A general framework}

In this section we describe how to turn any efficiently quantumly computable unitary into a distribution that can be distinguished from uniform by a BQLOGTIME machine. Our framework generalizes the setup in \cite{A}. The ``quantum part'' of the paper is almost entirely contained within this section, so we review some relevant preliminaries below before describing the main result.

\subsection{Quantum preliminaries}
\label{sec:preliminaries}

A {\em unitary} matrix is a square matrix $U$ with complex entries such that $UU^{*} = I$, where $U^{*}$ is the conjugate transpose. Equivalently, its rows (and columns) form an orthonormal basis. We name the standard basis vectors of the $N = 2^n$-dimensional vectorspace underlying an $n$-qubit system by $|v\rangle$ for $v \in \{0,1\}^n$. A {\em local} unitary is a unitary that operates only on $b = O(1)$ qubits; i.e. after a suitable renaming of the standard basis by reordering qubits, it is the matrix $U \otimes I_{2^{n-b}}$, where $U$ is a $2^b \times 2^b$ unitary $U$. A local unitary can be applied in a single step of a quantum computer. A {\em local decomposition} of a unitary is a factorization into local unitaries. We say an $N \times N$ unitary is {\em efficiently quantumly computable} if this factorization has at most $\poly(n)$ factors.

A {\em quantum circuit} applies a sequence of local unitaries (``gates'') where each gate is drawn from a fixed, finite set of gates. There are universal finite gate sets for which any efficiently quantumly computable unitary can be realized (up to exponentially small error) by a $\poly(n)$-size quantum circuit \cite{ASV}.

In this paper, the only manner in which our BQLOGTIME algorithm will access the input string $x$ is the following operation, which ``multiplies $x$ into the phases''. There are three steps: (1) query with the query register clean, which applies the map $|i \rangle|0 \rangle \mapsto |i\rangle|0 \oplus x_i \rangle$ (note each $x_i$ is in $\{0,1\}$); (2) apply to the last qubit the map $|0\rangle \mapsto -|0\rangle, |1\rangle \mapsto |1\rangle$; (3) query again to uncompute the last qubit. When we speak of ``multiplying $x$ into the phase'' it will be linguistically convenient to speak about $x$ as a vector with entries from $\{+1, -1\}$, even though one can see from this procedure that the actual input is a $0/1$ vector.

The following lemma will be useful repeatedly. It states (essentially) that a block diagonal matrix, all of whose blocks are efficiently quantumly computable, is itself efficiently quantumly computable. This is trivial when all of the blocks are identical, but not entirely obvious in general. The proof is in  Appendix \ref{app:omitted-proofs}

\begin{lemma}
Fix $N= 2^n$ and $M = 2^m$. Let $U$ be an $N \times N$ block diagonal matrix composed of the blocks $U_1, U_2, \ldots, U_M$, where each $U_i$ is a $N/M \times N/M$ matrix that has a $\poly(n)$-size quantum circuit, a description of which is generated by a uniform $\poly(n)$ time procedure, given input $i$.  Then given three registers of $m$ qubits, $n-m$ qubits, and $\poly(n)$ qubits, respectively, with the third register initialized to $|000\cdots0\rangle$,  there is a $\poly(n)$ size uniform quantum circuit that applies $U$ to the first two registers and leaves the third unchanged.
\label{lem:block-diagonal}
\end{lemma}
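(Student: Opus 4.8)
The plan is to build the circuit for $U$ by first controlling on the $m$-qubit register (which selects the block $U_i$) to write into the third (ancilla) register a description of the quantum circuit for $U_i$, and then to run a fixed \emph{universal} circuit that, reading the description from the ancilla, applies the corresponding gates to the second register. The subtlety the lemma flags — that this is ``not entirely obvious in general'' — is exactly that the gates of $U_i$ depend on $i$, so we cannot simply repeat one fixed subcircuit; we have to interpret a program given on an ancilla. Fortunately this is precisely what a (uniform) universal quantum circuit does, so the work is in assembling the pieces and bookkeeping the qubit counts.

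First I would invoke the hypothesis: since a $\poly(n)$-time uniform procedure outputs, on input $i$, a description $\langle C_i\rangle$ of a $\poly(n)$-size quantum circuit for $U_i$ over the fixed universal gate set, this map $i \mapsto \langle C_i\rangle$ is computed by a $\poly(n)$-size classical circuit, hence (by standard reversible simulation, e.g. Bennett) by a $\poly(n)$-size reversible circuit, hence by a $\poly(n)$-size quantum circuit that realizes $|i\rangle|0^{\poly(n)}\rangle \mapsto |i\rangle|\langle C_i\rangle\rangle$ (using a few more clean ancillas for garbage, which we uncompute at the end). This populates part of the third register with the program for the selected block. Second, I would apply a fixed \emph{universal} quantum circuit $V$ on (third register, second register): $V$ reads $\langle C_i\rangle$ from the third register and applies the corresponding sequence of at most $\poly(n)$ local gates to the second register (of $n-m$ qubits), up to exponentially small error, using the standard construction of a programmable/universal circuit of size $\poly(n)$ in the program length and the number of target qubits. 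Because the program register is in a basis state $|\langle C_i\rangle\rangle$ whenever the first register is $|i\rangle$, by linearity $V$ acts as $|i\rangle|\langle C_i\rangle\rangle|\psi\rangle \mapsto |i\rangle|\langle C_i\rangle\rangle\,(U_i|\psi\rangle)$, i.e. exactly the block-diagonal action $U$ on the first two registers. Third, I would run the description-writing circuit in reverse to restore the third register to $|0\cdots0\rangle$; this is legitimate because $V$ left the program register untouched. Composing, the total gate count is $\poly(n)$ and the construction is uniform, as required; the exponentially small error from the universal-circuit step is absorbed in the usual way (or one notes the gate set can be chosen so that each $U_i$ is realized exactly).

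The main obstacle — really the only non-routine point — is justifying the existence of the fixed, uniform, $\poly(n)$-size \emph{universal} quantum circuit that takes a program on one register and applies it to a data register: one must be careful that its size is polynomial in both the program length and the number of target qubits, and that running through ``at most $\poly(n)$'' gates of the program (padding shorter programs with identities) keeps everything uniform and within the budget. This is standard (it underlies the existence of universal gate sets and programmable quantum circuits, cf. \cite{ASV}), so I would cite it rather than reconstruct it. Everything else — classical-to-reversible simulation, uncomputation of ancillas, and the linearity argument that promotes the controlled action on basis states of the selector register to the full block-diagonal unitary — is routine and I would state it briefly.
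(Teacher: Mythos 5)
Your proposal follows the same three-phase structure as the paper's proof: (1) reversibly write a classical description of the selected block's circuit into the ancilla register, (2) run a ``program interpreter'' on the (ancilla, data) registers that applies the described gate sequence, and (3) uncompute the ancilla. By linearity over basis states of the selector register, this realizes the block-diagonal unitary. So the overall architecture is correct and matches the paper.

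The one place you differ is exactly the point you yourself flag as ``the main obstacle'': you invoke the existence of a fixed, $\poly(n)$-size programmable/universal circuit as a cited black box, whereas the paper actually constructs it, and the construction is the content of the lemma. The subtlety is that the gates in the circuit for $U_i$ act on a set of qubits that depends on $i$, so a naive ``controlled gate'' step would be controlled on the program and would also have to select among $\binom{n-m}{b}$ possible target-qubit tuples --- this is not a local unitary. The paper resolves this by first converting each $U_i$'s circuit to an \emph{oblivious} form (cycling through a fixed schedule of $b$-tuples of qubit positions, padding with identities, at an $n^b$ blowup), so that at step $\ell$ the target qubits are a fixed function of $\ell$ alone and only the \emph{gate type} (one of $d$ choices) is read from the program register. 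Then each interpreter step is a genuine controlled-unitary on $O(1)$ qubits ($O(\log d)$ program qubits plus $b$ data qubits). Without the oblivious-scheduling trick, ``cite a universal programmable circuit'' leaves a real gap, because the no-programming theorem rules out generic programmability and the usual references on universal gate sets do not directly hand you this interpreter in the needed form. Your instinct that the interpreter must be handled carefully is right; the paper's oblivious-circuit device is the concrete mechanism that does it, and your proof would be complete once you fold that in (or find an equally explicit substitute). Your aside about exponentially small error is not needed here: since each $U_i$ is assumed to \emph{have} a $\poly(n)$-size circuit over the fixed gate set, the interpreter can reproduce it exactly.
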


\subsection{The quantum algorithm}

Let $A$ be any $N \times N$ matrix with entries\footnote{We could extend this framework to matrices with general entries, but we choose to present this restriction since it is all we need.} in $\{0,1,-1\}$ and pairwise orthogonal rows, and define $S(A, i)$ to be the support of the $i$-th row of matrix $A$. Define $\overline{A}$ to be the matrix $A$ with entries in row $i$ scaled by $1/\sqrt{|S(A, i)|}$, and observe that $\overline{A}$ is a unitary matrix.

Define the random variable $D_{A, M} = (x, z)$ distributed on $\{+1,-1\}^{2N}$ by picking $x \in \{+1, -1\}^N$ uniformly, and setting the next $N$ bits to be $z \in \{+1,-1\}^N$ defined by $z_i = \mbox{sgn}((Ax)_i) = \mbox{sgn}((\overline{A}x)_i)$ for $i \le M$ and $z_i$ independently and uniformly random in $\{+1, -1\}$ for $i > M$.

It will be convenient to think of $M = N$ initially; we analyze the general case because we will eventually need to handle $M = N/2$. Below, we use $U_{2N}$ to denote the random variable uniformly distributed on $\{+1, -1\}^{2N}$.

\begin{theorem}
Let $N = 2^n$ for an integer $n > 0$, and let $M = \Omega(N)$. For every matrix $A \in \{0,1,-1\}^{N \times N}$ with pairwise orthogonal rows, there is a BQLOGTIME algorithm $Q_A$ that distinguishes $D_{A, M}$ from $U_{2N}$; i.e., there is some constant $\epsilon > 0$ for which:
\[|\Pr[Q_A(D_{A,M}) = 1] - \Pr[Q_A(U_{2N}) = 1]| > \epsilon.\]
The algorithm is uniform if $A$ comes from a uniform family of matrices.
\label{thm:alg}
\end{theorem}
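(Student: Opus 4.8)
The plan is to run a ``forrelation''-style quantum algorithm in the spirit of \cite{A}, but with the unitary $\overline A$ playing the role that the Fourier transform plays there. On input $(x,z)\in\{+1,-1\}^{2N}$ (the first $N$ coordinates forming $x$ and the last $N$ forming $z$), the algorithm $Q_A$ will use $n=\log N$ main qubits together with $O(1)$ scratch qubits, and proceed as follows: (i) apply $H^{\otimes n}$ to the main register, producing $\frac1{\sqrt N}\sum_{i=1}^N|i\rangle$; (ii) multiply $x$ into the phases (the operation described in Section~\ref{sec:preliminaries}), producing $\frac1{\sqrt N}\sum_i x_i|i\rangle$; (iii) apply $\overline A$, producing $\frac1{\sqrt N}\sum_i(\overline A x)_i|i\rangle$; (iv) multiply $z$ into the phases, producing $\frac1{\sqrt N}\sum_i z_i(\overline A x)_i|i\rangle$; (v) apply $H^{\otimes n}$ again, measure the main register, and accept iff the outcome is $0^n$. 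Steps (i), (ii), (iv), (v) use $O(\log N)$ elementary gates and $O(1)$ query gates; step (iii) costs $\poly\log N$ gates whenever $\overline A$ is efficiently quantumly computable, which for the unitaries to which this theorem is applied is precisely what Lemma~\ref{lem:block-diagonal} and the local decomposition of Section~\ref{sec:local-decomposition} supply --- so $Q_A$ is a BQLOGTIME algorithm, and it is uniform exactly when the circuit for $\overline A$ is. A one-line computation of the amplitude on $|0^n\rangle$ after step (v) then gives, for every fixed input,
\[\Pr[Q_A(x,z)=1]=\left|\frac1N\sum_{i=1}^N z_i(\overline A x)_i\right|^2.\]

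Next I would evaluate this probability on the two distributions. Under $U_{2N}$, $z$ is uniform and independent of $x$, so expanding the square and using $\mathbb E_z[z_iz_{i'}]=\delta_{ii'}$ together with unitarity of $\overline A$ and $\|x\|_2^2=N$ gives expected acceptance probability $\frac1{N^2}\mathbb E_x[\|\overline A x\|_2^2]=\frac1N$. Under $D_{A,M}$, for every $i\le M$ we have $z_i=\mathrm{sgn}((\overline A x)_i)$, hence $z_i(\overline A x)_i=|(\overline A x)_i|$ there; writing $\sum_i z_i(\overline A x)_i=P(x)+R$ with $P(x)=\sum_{i\le M}|(\overline A x)_i|\ge0$ and $R=\sum_{i>M}z_i(\overline A x)_i$, and noting $\mathbb E[R\mid x]=0$ because $z_{M+1},\dots,z_N$ are independent uniform signs, the cross term vanishes and
\[\mathbb E_{D_{A,M}}\!\bigl[\Pr[Q_A=1]\bigr]\ \ge\ \tfrac1{N^2}\,\mathbb E_x[P(x)^2]\ \ge\ \tfrac1{N^2}\bigl(\mathbb E_x[P(x)]\bigr)^2\]
by Jensen's inequality, so it remains to show $\mathbb E_x[P(x)]=\Omega(N)$.

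Here is the crux of the argument. Because $A$ has all entries in $\{0,1,-1\}$, the coordinate $(\overline A x)_i$ equals $|S(A,i)|^{-1/2}\sum_{j\in S(A,i)}A_{ij}x_j$, i.e.\ a normalized sum of $|S(A,i)|\ge1$ i.i.d.\ uniform $\pm1$ random variables. Khintchine's inequality --- or, concretely, the elementary bound $\mathbb E[|Y|]\ge(\mathbb E[Y^2])^{3/2}(\mathbb E[Y^4])^{-1/2}$ applied to the signed bit-sum $Y$, using $\mathbb E[Y^2]=|S(A,i)|$ and $\mathbb E[Y^4]\le 3|S(A,i)|^2$ --- yields $\mathbb E_x[|(\overline A x)_i|]\ge c$ for an absolute constant $c>0$ (one may take $c=1/\sqrt3$), for every $i$. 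Hence $\mathbb E_x[P(x)]=\sum_{i\le M}\mathbb E_x[|(\overline A x)_i|]\ge cM=\Omega(N)$ since $M=\Omega(N)$, and therefore the expected acceptance probability of $Q_A$ under $D_{A,M}$ is at least a positive constant. Together with the value $1/N$ under $U_{2N}$, the distinguishing advantage is $\Omega(1)-1/N$, which exceeds a fixed $\epsilon>0$ for all $N$ past an absolute threshold; for the finitely many smaller $N$ the input has constant length and $D_{A,M}\ne U_{2N}$, so a constant advantage is attainable trivially, and one takes $\epsilon$ to be the overall minimum.

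The hard part is really the single non-routine estimate $\mathbb E_x[|(\overline A x)_i|]\ge c$ --- a first-absolute-moment / anti-concentration fact --- and this is exactly where the hypothesis ``$A\in\{0,1,-1\}^{N\times N}$'' is used: it forces each coordinate of $\overline A x$ to be a normalized Rademacher sum, whereas for general entries this quantity could be $o(1)$ and the argument would collapse. Everything else is bookkeeping: dealing with general $M=\Omega(N)$ rather than $M=N$; verifying that the $P$--$R$ cross term vanishes; observing that the final $H^{\otimes n}$ is what exposes the $z$-dependence (a direct measurement right after step (iv) would give outcome probabilities $\frac1N(\overline A x)_i^2$, independent of $z$); and the caveat that ``$Q_A\in\mathrm{BQLOGTIME}$'' presupposes $\overline A$ can be applied with few gates, which is what the local-decomposition portion of the paper establishes for the relevant unitaries.
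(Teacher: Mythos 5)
Your proof is correct and uses exactly the same algorithm as the paper's (uniform superposition, phase‑multiply by $x$, apply $\overline A$, phase‑multiply by $z$, Hadamard, measure), and hinges on the same key estimate, namely $\mathbb E_x[|(\overline A x)_i|]=\Omega(1)$ for a normalized Rademacher sum. Where you diverge is in the probabilistic bookkeeping, and there your version is noticeably tighter and cleaner. For the uniform case, the paper argues by symmetry that $\mathbb E[(Hw)_i^2]=1/N$ and then (somewhat redundantly) applies Markov to get a $2/\sqrt N$ bound on acceptance; you simply expand $\bigl|\frac1N\sum_i z_i(\overline A x)_i\bigr|^2$, use $\mathbb E_z[z_iz_{i'}]=\delta_{ii'}$ and unitarity, and read off the exact value $1/N$. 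For the $D_{A,M}$ case, the paper lower‑bounds $\mathbb E[(Hw)_{0^n}]=\Omega(1)$ and then pushes through Markov on $1-(Hw)_{0^n}$ to get a constant; you instead split the sum as $P(x)+R$, kill the cross term, and apply Jensen to go straight to $\mathbb E[(Hw)_{0^n}^2]\ge \frac1{N^2}(\mathbb E[P])^2=\Omega(1)$, which is the quantity you actually want (the acceptance probability). Your explicit fourth‑moment Khintchine bound ($\mathbb E[|Y|]\ge(\mathbb E Y^2)^{3/2}(\mathbb E Y^4)^{-1/2}$) makes the ``$\Omega(\sqrt{|S(A,i)|})$ distance of a random walk'' step of the paper concrete, and you correctly flag the implicit requirement — present in both proofs — that $\overline A$ admit a small local decomposition for the ``BQLOGTIME'' claim to be literal, which Section~\ref{sec:local-decomposition} supplies for the unitaries actually used. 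In short: same route, with sharper and more self‑contained estimates.
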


\begin{proof}
The input to the algorithm is a pair of strings $x, z \in \{+1, -1\}^N$.

The algorithm performs the following steps:
\begin{enumerate}
\item Enter a uniform superposition $\frac{1}{\sqrt{N}}\sum_{i \in \{0,1\}^n} |i\rangle$ and multiply $x$ into the phase to obtain $\frac{1}{\sqrt{N}}\sum_{i \in \{0,1\}^n}x_i|i\rangle$.

\item Apply $\overline{A}$ to obtain $\frac{1}{\sqrt{N}}\sum_{i \in \{0,1\}^n}(\overline{A}x)_i|i \rangle$.

\item Multiply $z$ into the phase to obtain $\frac{1}{\sqrt{N}}\sum_{i \in \{0,1\}^n}z_i(\overline{A}x)_i|i\rangle$.

\item Define vector $w$ by $w_i = \frac{1}{\sqrt{N}}z_i(\overline{A}x)_i$. Apply the $N \times N$ Hadamard\footnote{This is the matrix $H$ whose rows and columns are indexed by $\{0,1\}^n$, with entry $(i,j)$ equal to $-1^{\langle i, j\rangle}/\sqrt{N}$.} $H$ to obtain $\sum_{i \in \{0,1\}^n}(Hw)_i|i\rangle$, and measure in the computational basis. Accept iff the outcome is $0^n$.
\end{enumerate}

We first argue that the acceptance probability is small in case $(x, z)$ is distributed as $U_{2N}$. This follows from a symmetry argument: for fixed $x$, and $w$ as defined in Step 4 above, the vector $Hw$ above has every entry identically distributed, because $z$ is independently chosen uniformly from $\{-1,+1\}^N$ and
every row of $H$ is a vector in $\{-1, +1\}^N$. In particular this implies that the random variable $(Hw)_i^2$ is identically distributed for all $i$. Together with the fact that $\sum_i (Hw)_i^2 = 1$, we conclude that $E[(Hw)_i^2] = 1/N$. Then by Markov, with probability at least $1 - 1/\sqrt{N}$ we accept with probability at most $\sqrt{N}/N$, for an overall acceptance probability of at most $2/\sqrt{N}$.

Next, we argue that the acceptance probability is large in case $(x, z)$ is distributed as $D_{A, M}$. Here we observe that for $i \le M$, $w_i = \frac{1}{\sqrt{N}}|(\overline{A}x)_i|$ and hence $E[w_i] = \frac{1}{\sqrt{N\cdot|S(A, i)|}}\Omega(\sqrt{|S(A, i)|}) = \Omega(1/\sqrt{N})$ (since before scaling, $w_i$ is just the distance from the origin of a random walk on the line, with $|S(A, i)|$ steps). For $i > M$, we simply have $E[w_i] = 0$. Then $E[\sum_i w_i] = M \cdot \Omega(1/\sqrt{N}) = \Omega(\sqrt{N})$, so $E[(Hw)_{0^n}] = \Omega(1)$. Since the random variable $(Hw)_{0^n}$ is always bounded above by $1$, we can apply Markov to its negation to conclude that with constant probability, it is {\em at least} a constant $\epsilon$ (and in such cases the acceptance probability is at least $\epsilon^2$). Overall, the acceptance probability is $\Omega(1)$.
\end{proof}

The BQLOGTIME algorithm for what Aaronson calls {\sc fourier checking} in \cite{A} is recovered from the above framework by taking $A$ to be a DFT matrix (and $M = N$).

\section{Unitary matrices with large, nearly-disjoint row supports}
\label{sec:constructions}

In this section we construct unitary matrices $A$ with the additional property that all or ``almost all'' of the row supports $S(A, i)$ are large and have bounded intersections. We also show that these unitaries are efficiently quantumly computable. This is the final part of our main result: the distribution $D_{A, M}$ (it will turn out that $M$ will be half the underlying dimension) is distinguishable from uniform by a BQLOGTIME algorithm by Theorem \ref{thm:alg}, and at the same time $D_{A,M}$ can be seen as an NW distribution that by Conjecture \ref{conj:ours} fools $AC_0$ (see Section \ref{sec:putting-it} for the precise statement).

\subsection{The paired-lines construction}
\label{sec:paired-lines}

We describe a collection of $q^2/2$ pairwise-orthogonal rows, each of which is a vector in $\{0,+1,-1\}^{q^2}$. We identify $q^2$ with the affine plane $\F_q \times \F_q$, where $q = 2^n$ for an integer $n > 0$. Let $B_1, B_2$ be an equipartition of $\F_q$, and let $\phi:B_1 \rightarrow B_2$ be an arbitrary bijection. Our vectors are indexed by a pair $(a,b) \in \F_q \times B_1$, and their coordinates are naturally identified with $\F_q \times \F_q$:
\begin{equation}
v_{a,b}[x, y] = \left \{\begin{array}{ll}-1 & y = ax + b\\ +1 & y = ax + \phi(b) \end{array}\right .
\label{eq:paired-lines-construction}
\end{equation}
Notice that $v(a,b)$ is $-1$ on exactly the points of $\F_q \times \F_q$ corresponding to the line with slope $a$ and $y$-intercept $b$, and $+1$ on exactly the points of $\F_q \times \F_q$ corresponding to the line with slope $a$ and $y$-intercept $\phi(b)$. So each $v(a, b)$ is supported on exactly a pair of parallel lines. Orthogonality will follow from the fact that every two non-parallel line-pairs intersect in exactly one point, as argued in the proof of the next lemma.

\begin{lemma}
\label{lem:paired-lines-are-design}
The vectors defined in Eq. (\ref{eq:paired-lines-construction}) are pairwise orthogonal, and their supports form a $(2q, 4)$ design.
\end{lemma}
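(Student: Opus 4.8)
The plan is to verify the two claimed properties directly from the geometry of line-pairs in the affine plane $\F_q \times \F_q$. First I would record the basic facts: each $v_{a,b}$ is supported on the union of the line $L(a,b) = \{(x,y): y = ax+b\}$ and the line $L(a,\phi(b)) = \{(x,y): y = ax + \phi(b)\}$, which are parallel (same slope $a$) and distinct (since $b \in B_1$ and $\phi(b) \in B_2$, and $B_1, B_2$ are disjoint). Hence $|S(A,(a,b))| = 2q$, giving the design parameter $\ell = 2q$. Since $b$ ranges over $B_1$ which has size $q/2$ and $a$ ranges over all of $\F_q$, there are $q^2/2$ vectors, matching the stated count.

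Next I would handle pairwise orthogonality. Given two distinct index pairs $(a,b)$ and $(a',b')$, the inner product $\langle v_{a,b}, v_{a',b'}\rangle$ is a sum over the points lying in the intersection of the two supports, where each shared point contributes $\pm 1$ according to the product of signs. The key case split is on whether $a = a'$. If $a = a'$ but $b \neq b'$, then all four lines involved have the same slope, so $L(a,b), L(a,\phi(b)), L(a,b'), L(a,\phi(b'))$ are pairwise parallel; since $b,b',\phi(b),\phi(b')$ are all distinct (using that $\phi$ is a bijection $B_1 \to B_2$ and $B_1 \cap B_2 = \emptyset$, so the only way to get a collision would be $b = b'$), the two supports are disjoint and the inner product is $0$. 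If $a \neq a'$, then any line of slope $a$ meets any line of slope $a'$ in exactly one point, so the support of $v_{a,b}$ meets the support of $v_{a',b'}$ in exactly four points — one from each of the four (line of $v_{a,b}$, line of $v_{a',b'}$) pairs. I would then argue the four sign-products cancel in pairs: fixing a line $L$ from the first pair, it meets $L(a',b')$ in one point (contributing $(\mathrm{sgn}\ \text{on } L)\cdot(-1)$) and meets $L(a',\phi(b'))$ in one point (contributing $(\mathrm{sgn}\ \text{on } L)\cdot(+1)$), and these two cancel. Summing over the two choices of $L$ gives total $0$. This establishes orthogonality.

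Finally, for the intersection bound $p = 4$: I must show that for $(a,b) \neq (a',b')$, $|S(A,(a,b)) \cap S(A,(a',b'))| \le 4$. This is immediate from the case analysis above — when $a = a'$ the intersection is empty (size $0 \le 4$), and when $a \neq a'$ the intersection is exactly four points as computed. So the supports form a $(2q,4)$ design. The main thing to be careful about is the degenerate-looking $a = a'$ case: one must confirm the four $y$-intercepts are genuinely distinct so that the supports really are disjoint rather than overlapping, which is exactly where the hypotheses that $\{B_1,B_2\}$ is an equipartition (so $B_1 \cap B_2 = \emptyset$) and that $\phi$ is a bijection get used. There are no hard estimates here; the entire argument is the elementary incidence geometry of affine lines plus bookkeeping of the $\pm 1$ signs, and the only real "obstacle" is stating the sign-cancellation in the $a \neq a'$ case cleanly enough that the cancellation is manifest.
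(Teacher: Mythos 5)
Your proof is correct and takes essentially the same approach as the paper: case split on $a = a'$ versus $a \ne a'$, disjointness of the parallel-line supports in the first case (using that $b, b', \phi(b), \phi(b')$ are pairwise distinct), and exactly four intersection points with signs cancelling in the second case. The only cosmetic difference is that you organize the sign-cancellation by fixing one line of the first pair and pairing off its two intersections, whereas the paper writes out all four intersection points and their contributions explicitly.
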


\begin{proof}
Consider $(a, b) \ne (a', b')$. If $a = a'$ then the supports of $v(a, b)$ and $v(a, b')$ are disjoint. Otherwise $a \ne a'$ and there are exactly four intersection points (obtained by solving linear equations over $\F_q$):
\begin{itemize}
\item $(x = (b' - b)/(a - a'), y = ax + b) = (x = (b' - b)/(a - a'), y = a'x + b')$, which contributes $(-1)\cdot(-1) = 1$ to the inner product, and
\item $(x = (b' - \phi(b))/(a - a'), y = ax + \phi(b)) = (x = (b' - \phi(b))/(a - a'), y = a'x + b')$, which contributes $(+1)\cdot(-1) = -1$ to the inner product, and
\item $(x = (\phi(b') - b)/(a - a'), y = ax + b) = (x = (\phi(b') - b)/(a - a'), y = a'x + \phi(b'))$, which contributes $(-1)\cdot(+1) = -1$ to the inner product, and
\item $(x = (\phi(b') - \phi(b))/(a - a'), y = ax + \phi(b)) = (x = (\phi(b') - \phi(b))/(a - a'), y = a'x + \phi(b'))$, which contributes $(+1)\cdot(+1) = 1$ to the inner product.
\end{itemize}
The sum of the contributions to the inner product from these four points is zero. The computation of the support size is straightforward.
\end{proof}

In Appendix \ref{app:all-rows}, we give another construction (which is not needed for our main result) in which the number of vectors is exactly equal to the dimension of the underlying space (giving rise to a unitary in which ``all rows participate'' instead of only half of the rows).

\subsection{A local decomposition}
\label{sec:local-decomposition}

We new describe an $q^2 \times q^2$ unitary matrix that is efficiently quantumly computable and has the (normalized) vectors $v(a, b)$ from Eq. (\ref{eq:paired-lines-construction}) as $q^2/2$ of its $q^2$ rows. We recall that $q = 2^n$ for an integer $n > 0$.

\begin{proposition}
The following $q \times q$ unitary matrices are efficiently quantumly computable:
\begin{itemize}
\item The DFT matrix $F$ with respect to the additive group of $\F_q$.
\item The inverse DFT matrix $F^{-1}$ with respect to the additive group of $\F_q$.
\item The $q \times q$ unitary matrix $B$ with $\frac{1}{\sqrt{2}}(I_{q/2} | -I_{q/2})$ as its first $q/2$ rows, $\frac{1}{\sqrt{4}}(I_{q/4}|-I_{q/4}|I_{q/4}|-I_{q/4})$ as its next $q/4$ rows, $\frac{1}{\sqrt{8}}(I_{q/8}|-I_{q/8}|I_{q/8}|-I_{q/8}|I_{q/8}|-I_{q/8}|I_{q/8}|-I_{q/8})$ as its next $q/8$ rows, etc... and whose last row is $\frac{1}{\sqrt{N}}(1,1, 1, \ldots, 1).$
\end{itemize}
\end{proposition}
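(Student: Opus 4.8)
The plan is to handle the three unitaries separately, since each reduces to a standard construction. For the DFT matrix $F$ over the additive group of $\F_q$ with $q = 2^n$: the additive group of $\F_q$ is $(\Z/2\Z)^n$, so $F$ is (up to reordering of basis vectors by a linear change of coordinates identifying $\F_q$ with $\zo^n$) exactly the tensor power $H_2^{\otimes n}$ of the $2 \times 2$ Hadamard gate. Thus it has a trivial local decomposition into $n$ single-qubit gates, and $F^{-1} = F$ in this case (the additive DFT over a group of exponent $2$ is its own inverse up to normalization), so the first two bullets are immediate. I would state this carefully: the only subtlety is that the ``standard'' indexing of the rows/columns of $F$ by elements of $\F_q$ must be matched to the qubit-indexing $\zo^n$ by fixing an $\F_2$-basis of $\F_q$; this is a fixed permutation of basis vectors, hence realizable at no cost (or absorbed into how we read inputs/outputs).

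For the matrix $B$, the plan is to recognize its rows as (a reordering of) the rows of the $q \times q$ Hadamard matrix $H = H_2^{\otimes n}$, grouped by Hamming-weight-like structure. Concretely, reading the description: the first $q/2$ rows have the form $\frac{1}{\sqrt 2}(I_{q/2} \mid -I_{q/2})$, i.e. each is $\chi_{e_1} \otimes (\text{standard basis vector})$ where $\chi_{e_1}$ is the sign character depending on the first coordinate; the next $q/4$ rows are $\frac1{\sqrt4}(I \mid -I \mid I \mid -I)$ blocks, i.e. $\chi_{\{1,2\}}\otimes(\cdots)$; and so on, ending with the all-ones row $\chi_\emptyset$. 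So $B$ is obtained from $H_2^{\otimes n}$ by permuting the rows so that row $v$ of $H$ (indexed by $v\in\zo^n$) is sent to position determined by (the top-nonzero-bit of $v$, followed by the lower bits of $v$). In other words $B = P \cdot H_2^{\otimes n}$ for an explicit permutation matrix $P$ acting on the row-index. Since $H_2^{\otimes n}$ is a product of $n$ local gates, it suffices to show $P$ is efficiently quantumly computable. The permutation $P$ maps $v \mapsto (\mathrm{msb}(v), v \bmod 2^{\mathrm{msb}(v)})$, which is a LOGTIME/$\poly(n)$-computable bijection on $\zo^n$; any $\poly(n)$-time-computable permutation of basis states is realizable by a $\poly(n)$-size reversible (hence quantum) circuit, so $P$ — and therefore $B$ — is efficiently quantumly computable.

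The main obstacle I anticipate is making the permutation-$P$ argument fully rigorous and checking that $B$ as literally described is a permutation of $H_2^{\otimes n}$ (one must verify the row counts $q/2, q/4, \dots, 1$ sum to $q$, that the rows listed are mutually orthogonal and each has the right norm, and that the block-with-alternating-signs rows are exactly the characters $\chi_S$ for $S$ an initial segment $\{1,\dots,j\}$ — so only $n+1$ distinct characters appear among $q$ rows, meaning rows are repeated with different ``tail'' standard-basis factors). Once that identification is pinned down, invoking a standard fact — every $\poly(n)$-time computable permutation of $\zo^n$ has a $\poly(n)$-size uniform reversible circuit, and reversible circuits are quantum circuits — closes the argument; I would cite the universality result \cite{ASV} already quoted in the preliminaries, or simply the folklore Toffoli-based simulation of Boolean circuits. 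A minor point to get right is uniformity: the bijection defining $P$ and the factorization of $F$ both come from a fixed $\F_2$-basis of $\F_q$, so everything is LOGTIME-uniform in $n$ as required by Lemma~\ref{lem:block-diagonal} and Theorem~\ref{thm:alg}.
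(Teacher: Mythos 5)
Your handling of the first two bullets (DFT and inverse DFT over the additive group of $\F_{2^n}$, which is $(\Z/2\Z)^n$, so the DFT is $H_2^{\otimes n}$ up to a fixed permutation coming from a choice of $\F_2$-basis) matches the paper's implicit reasoning and is correct.

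The third bullet has a genuine gap: the central claim that $B = P \cdot H_2^{\otimes n}$ for a permutation matrix $P$ is false. The rows of $H_2^{\otimes n}$ are the $q$ distinct characters $\chi_v(x) = (-1)^{\langle v,x\rangle}/\sqrt q$, all of which have no zero entries. By contrast, almost every row of $B$ has many zero entries --- e.g.\ the very first row is $\frac{1}{\sqrt 2}(1, 0, \ldots, 0, -1, 0, \ldots, 0)$, which is $q/2 - 1$ zeros on each side of a $\pm 1$. A row permutation cannot introduce zeros, so $B \ne P\,H_2^{\otimes n}$ for any permutation $P$. (Nor does a column permutation help, for the same reason.) You actually notice the symptom yourself at the end: you observe that only $n+1$ distinct characters appear and that the remaining structure is ``tail'' standard-basis factors; but that is precisely the observation that $B$'s rows are \emph{not} rows of $H_2^{\otimes n}$, so the permutation ansatz cannot close. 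The correct structure is that row $(\text{level } i, j)$ of $B$ factors as a tensor product whose top $i$ coordinates carry Hadamard-type rows and whose bottom $n-i$ coordinates carry the standard basis vector $e_j$ --- a mixture of character and delta, not a pure character.

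The paper's proof instead gives a direct local factorization: letting $B_i$ be the $q\times q$ identity with its lower-right $2^i\times 2^i$ block replaced by $H\otimes I_{2^{i-1}}$ (where $H = \frac{1}{\sqrt2}\bigl(\begin{smallmatrix}1 & -1\\ 1 & 1\end{smallmatrix}\bigr)$), one checks $B = B_1 B_2 \cdots B_n$, and each $B_i$ is efficiently quantumly computable by Lemma~\ref{lem:block-diagonal}. This is the key idea your proposal is missing: $B$ is a \emph{product} of $n$ local unitaries, not a permuted Hadamard. If you want to repair your argument, replace the permutation claim with this telescoping factorization (or an equivalent ``controlled-Hadamard on a shrinking suffix'' circuit) and verify the product by induction on $n$.
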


\begin{proof}
The first two matrices are well-known to be efficiently quantumly computable. For the last one we make use of the Hadamard matrix
\[H = \frac{1}{\sqrt{2}}\left (\begin{array}{cc} 1 & -1 \\ 1 & 1 \end{array} \right ).\]
Let $B_i$ be the $q \times q$ identity matrix with its lower right $2^i \times 2^i$ submatrix replaced by the matrix $H \otimes I_{2^{i-1}}$. Each $B_i$ is efficiently quantumly computable by Lemma \ref{lem:block-diagonal}. It is then easy to verify that
$B = B_1B_2B_3\cdots B_n.$
\end{proof}

\begin{lemma}
Let $\alpha$ be a generator of the multiplicative group of  $\F_q$. For $c \in \F_q$, let $D_c$ denote the $q \times q$ diagonal matrix
\[\frac{1}{\sqrt{q}} \cdot \mbox{diag}\left (\sqrt{q}, (-1)^{\Tr(\alpha^1\cdot c)}, (-1)^{\Tr(\alpha^2\cdot c)}, (-1)^{\Tr(\alpha^3\cdot c)}, \ldots, (-1)^{\Tr(\alpha^{q-1}\cdot c)}\right ),\]
and let $D_c'$ denote the $q \times q$ diagonal matrix
\[\frac{1}{\sqrt{q}}\cdot \mbox{diag}\left (0, (-1)^{\Tr(\alpha^1\cdot c)}, (-1)^{\Tr(\alpha^2\cdot c)}, (-1)^{\Tr(\alpha^3\cdot c)}, \ldots, (-1)^{\Tr(\alpha^{q-1}\cdot c)}\right ).\]
Then the $q^2 \times q^2$ matrix $D$ whose $(i, j)$ block (with $i, j \in \F_q$) equals $D_{ij}$ if $i = j$ and $D_{ij}'$ otherwise, is efficiently quantumly computable.
\end{lemma}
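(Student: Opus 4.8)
The plan is to exploit the fact that every block of $D$, whether a diagonal block $D_{i^2}$ or an off-diagonal block $D_{ij}'$, is itself a \emph{diagonal} matrix; this makes $D$ secretly block diagonal with respect to the \emph{within-block} coordinate rather than the block coordinate. Concretely, I would index the rows and columns of $D$ by pairs $(i,k)$ with $i\in\F_q$ naming the block and $k\in\{0,1,\dots,q-1\}$ naming the position within a block; since each block is diagonal, $D[(i,k),(j,l)]=0$ unless $k=l$. Grouping coordinates by $k$, the matrix $D$ becomes block diagonal with $q$ blocks $M_0,M_1,\dots,M_{q-1}$, where $M_k$ is the $q\times q$ matrix whose $(i,j)$ entry is the $k$-th diagonal entry of the $(i,j)$ block of $D$. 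Equivalently, $D=\mathrm{SWAP}\cdot\big(\bigoplus_{k=0}^{q-1}M_k\big)\cdot\mathrm{SWAP}$, where $\mathrm{SWAP}$ exchanges the two $n$-qubit registers and is trivially efficiently quantumly computable. The statement then reduces to computing the $M_k$, showing each is efficiently quantumly computable with a description uniformly generated from $k$, and invoking Lemma~\ref{lem:block-diagonal}.

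First I would read the $M_k$ off the definitions of $D_c$ and $D_c'$. The $0$-th diagonal entry of a diagonal block $D_{i^2}$ is $1$, while the $0$-th diagonal entry of an off-diagonal block $D_{ij}'$ is $0$, so $M_0=I_q$. For $1\le k\le q-1$, the $k$-th diagonal entry of the $(i,j)$ block equals $\tfrac1{\sqrt q}(-1)^{\Tr(\alpha^k ij)}$ in \emph{all} cases (the diagonal-block value $\tfrac1{\sqrt q}(-1)^{\Tr(\alpha^k i^2)}$ being just the special case $j=i$), hence $M_k[i,j]=\tfrac1{\sqrt q}(-1)^{\Tr((\alpha^k i)\cdot j)}$. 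This is exactly the additive-group DFT $F$ of $\F_q$ with its rows permuted via multiplication by $\alpha^k$ in $\F_q$; in particular each $M_k$ is unitary, so $D$ is unitary as well.

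To finish, note $M_0=I_q$ is trivial, and for $k\ge 1$ the matrix $M_k$ is the product of $F$, which is efficiently quantumly computable by the preceding Proposition via a fixed $\poly(n)$-size circuit, with the coordinate permutation ``multiply by $\alpha^{k}$'' on $\F_q\cong\F_2^n$. That permutation is an invertible $\F_2$-linear transformation of $\F_2^n$, hence realizable by a reversible circuit of $O(n^2)$ \textsc{cnot} gates; and given $k$ one can compute $\alpha^k\in\F_q$ by repeated squaring, write down the $n\times n$ matrix over $\F_2$ of the corresponding linear map, and synthesize the \textsc{cnot} circuit, all in $\poly(n)$ time (on input $k=0$ the procedure simply outputs the trivial circuit). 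Composing gives a uniformly generated $\poly(n)$-size circuit for each $M_k$, so Lemma~\ref{lem:block-diagonal}, applied with its ``$N$'' equal to $q^2$ and its ``$M$'' equal to $q$ so that every block has size $q\times q$, shows that $\bigoplus_{k=0}^{q-1}M_k$ is efficiently quantumly computable (using a $\poly(n)$-qubit ancilla register that it returns clean); pre- and post-composing with the efficiently quantumly computable $\mathrm{SWAP}$ then shows $D$ is efficiently quantumly computable.

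The only genuinely non-mechanical step is the first one: spotting the ``transposed'' block-diagonal structure hidden in a matrix of diagonal blocks, and then checking that the anomalous entries $\sqrt q$ (in $D_c$) and $0$ (in $D_c'$) in position $0$ are precisely engineered so that $M_0$ collapses to the identity while the remaining $M_k$ become clean permuted Fourier transforms. Once this is seen, the field arithmetic, the $\F_2$-linear reversible-circuit synthesis, and the appeal to Lemma~\ref{lem:block-diagonal} are all routine.
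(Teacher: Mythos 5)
Your proof is correct and takes essentially the same route as the paper's: both observe that after swapping the two $\F_q$-registers the matrix of diagonal blocks becomes genuinely block-diagonal with $M_0=I_q$ and $M_k$ ($k\ge1$) a row-permuted DFT (permutation by multiplication by $\alpha^k$), then invoke Lemma~\ref{lem:block-diagonal}. You simply spell out the uniform circuit synthesis for the multiplication-by-$\alpha^k$ permutation (and the $1/\sqrt q$ normalization, which the paper elides) in more detail.
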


\begin{proof}
Consider the $q^2 \times q^2$ block-diagonal matrix that has as its $(k, k)$ block the matrix whose $(i, j)$ entry is $(-1)^{\Tr{(ij\alpha^k)}}$ for $k \in \set{1,2,\ldots, q-1}$ and whose $(0,0)$ block is $I_q$. Each such block except the $(0,0)$ block is the DFT matrix $F$ with its rows (or equivalently, columns) renamed according to the map $j \mapsto j\alpha^k$. The $F$ matrix is efficiently quantumly computable and the map $j \mapsto j\alpha^k$ is classically and reversibly (and thus quantumly) efficiently computable. Thus each $q \times q$ block on the diagonal is efficiently quantumly computable.  By Lemma \ref{lem:block-diagonal} the entire matrix is efficiently quantumly computable.

If we index columns by $(i, i') \in (\F_q)^2$ and rows by $(j, j') \in (\F_q)^2$, then the desired matrix $D$ is the above block-diagonal matrix with the order of the two indexing coordinates for the rows transposed, and the order of the two indexing coordinates for the columns transposed.
\end{proof}

\begin{theorem}
\label{thm:local-unitary-construction}
The $q^2 \times q^2$ matrix $(I_q \otimes B)\cdot (I_q \otimes F) \cdot D \cdot (I_q \otimes F^{-1})$, which is efficiently quantumly computable, has the vectors $v(a, b)$ from Eq. (\ref{eq:paired-lines-construction}) as $q^2/2$ of its rows\footnote{To be precise, these are the $v(a, b)$ with respect to {\em some} equipartition $B_1, B_2$ and {\em some} bijection $\phi$.}.
\end{theorem}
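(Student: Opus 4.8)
The goal is to verify that the explicit $q^2\times q^2$ matrix $W \eqdef (I_q\otimes B)\cdot(I_q\otimes F)\cdot D\cdot(I_q\otimes F^{-1})$ is (a) efficiently quantumly computable, and (b) has the vectors $v(a,b)$ of Eq.~(\ref{eq:paired-lines-construction}) among its rows, for an appropriate choice of equipartition $B_1,B_2$ and bijection $\phi$. Part (a) is immediate: each of the four factors was shown to be efficiently quantumly computable (the first two by the Proposition, $D$ by the preceding Lemma, noting $I_q\otimes F$ and $I_q\otimes F^{-1}$ are just $q$ parallel copies of $F,F^{-1}$ acting on the second register, hence efficiently quantumly computable e.g.\ by Lemma~\ref{lem:block-diagonal}), and a product of $\poly(n)$-many efficiently quantumly computable unitaries is efficiently quantumly computable. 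So the real content is part (b), which I would establish by a direct computation of a generic row of $W$.

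**Key steps in order.** First I would index rows and columns of all $q^2\times q^2$ matrices by pairs $(i,i')\in\F_q\times\F_q$ (first coordinate = first register / "which copy," second coordinate = coordinate within the copy), and trace a fixed row through the product from left to right. The leftmost factor $I_q\otimes B$ picks out, for a row indexed $(a, r)$ where $r$ indexes one of the "block rows" of $B$, a signed sum of standard basis covectors $\langle a, \cdot\,|$ supported on the second register; for $r$ in the top $q/2$ rows this is $\tfrac1{\sqrt2}(e_{(a,0)} - e_{(a,q/2)})$-type, etc. Then I apply $I_q\otimes F$: each $e_{(a,j)}$ becomes (up to the DFT normalization) the vector with entries $(-1)^{\Tr(j\beta^k)}$ or a character sum; combining the $B$-structure with the $F$, the resulting covector on the second register should be, for suitable $r$, exactly an indicator (or signed indicator) of a coset — i.e.\ $B\cdot F$ produces rows that are supported on arithmetic-progression-like / affine-subspace-like subsets of $\F_q$ with $\pm1$ signs. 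Next comes the diagonal $D$: since $D$'s $(a,a)$ block is (essentially) $D_{a^2}$-like and its off-diagonal $(a,j)$ block is $D_{aj}'$, right-multiplying our row vector by $D$ multiplies the entry in position $(i,i')$ by the phase $(-1)^{\Tr(\text{(something in }\alpha\text{)}\cdot ii')}$ depending on whether we are on or off the diagonal. Finally $I_q\otimes F^{-1}$ undoes a DFT on the second register, converting the character-sum / phase pattern back into a sparse $\pm1$ pattern. The punchline I'm aiming for: after the dust settles, the row of $W$ indexed by $(a,b)$ (for $b$ ranging over an appropriate half $B_1$ of $\F_q$, which is exactly why only $q^2/2$ rows are hit) has support equal to the pair of parallel lines $\{y=ax+b\}\cup\{y=ax+\phi(b)\}$ with value $-1$ on the first and $+1$ on the second, where $\phi$ is the bijection induced by the structure of $B$ — i.e.\ precisely $v(a,b)$.

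**The main obstacle.** The hard part is bookkeeping: correctly tracking the interaction of the three "Fourier-type" ingredients — the recursive sign-pattern matrix $B$, the DFT $F$ on the second register, and the slope-dependent diagonal $D$ whose entries involve $\Tr(\alpha^k\cdot c)$ with $c=ij$ — so that the composition collapses to the clean "pair of parallel affine lines with a $-1/+1$ sign split" description. In particular one must (i) choose the indexing conventions (and the transpose-of-coordinates step flagged in the $D$-lemma) so that the $(a,b)$ row comes out with slope exactly $a$; (ii) check that the $B$-matrix's nested block structure is exactly what's needed to turn "one line" into "a $\pm$-weighted pair of parallel lines" — intuitively, $B$ is a "difference of two shifted indicators" matrix in its top half, and $F D F^{-1}$ is a convolution/translation operator, so $F D F^{-1}$ followed by $B$ should yield (indicator of line $b$) minus (indicator of line $\phi(b)$); and (iii) verify the vanishing/normalization on the parts of $D$ coming from $D_c'$ (with a $0$ in the first diagonal slot) which is what makes the off-diagonal blocks behave. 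I expect that once the indexing is pinned down, each individual step is a one-line character-sum identity ($\sum_{k}(-1)^{\Tr(k c)} = q\cdot[c=0]$ and its cousins), and the Proposition/Lemmas already did the quantum-circuit work, so the proof is short modulo this careful unwinding. I would present it as: "Fix the indexing as follows\dots; tracing the row $(a,b)$ through the four factors gives, successively, \dots; the final vector is supported exactly on $\{y=ax+b\}\cup\{y=ax+\phi(b)\}$ with the stated signs, which is $v(a,b)$."

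\begin{proof}[Proof sketch]
Efficient quantum computability is immediate: $I_q\otimes F$ and $I_q\otimes F^{-1}$ are efficiently quantumly computable (apply Lemma~\ref{lem:block-diagonal}, or simply note they act as $F$, $F^{-1}$ on one register), $I_q\otimes B$ is efficiently quantumly computable by the Proposition and Lemma~\ref{lem:block-diagonal}, $D$ is efficiently quantumly computable by the preceding Lemma, and the product of $O(1)$ efficiently quantumly computable unitaries is efficiently quantumly computable. For the claim about the rows, index rows and columns by $(i,i')\in\F_q\times\F_q$ and trace the row indexed $(a,b)$ with $b\in B_1$ (where $B_1$ is the half of $\F_q$ corresponding to the top $q/2$ "block rows" of $B$) through the four factors from left to right. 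The action of $I_q\otimes B$ produces a signed difference of two standard-basis covectors on the second register; applying $I_q\otimes F$ then $D$ then $I_q\otimes F^{-1}$ is, on the second register, a translation-type operator twisted by the slope $a$, which converts each of the two basis covectors into the indicator of an affine line of slope $a$ in $\F_q\times\F_q$. Collecting signs, the resulting row is $-1$ on the line $\{y=ax+b\}$, $+1$ on the line $\{y=ax+\phi(b)\}$ (with $\phi:B_1\to B_2\eqdef\F_q\setminus B_1$ the bijection induced by the block structure of $B$), and $0$ elsewhere --- that is, exactly $v(a,b)$ of Eq.~(\ref{eq:paired-lines-construction}). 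As $(a,b)$ ranges over $\F_q\times B_1$ this gives $q^2/2$ of the rows of $W$. The verification of the line-by-line computation uses only the orthogonality relation $\sum_{k\in\F_q}(-1)^{\Tr(kc)}=q\cdot[c=0]$ and its shifts, together with the definitions of $B$ and $D$; we omit the routine details.
\end{proof}
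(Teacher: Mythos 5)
Your high-level plan (trace a row through the four factors, recognize that conjugating the diagonal block by $F$ produces a translation operator, and use the top-half rows of $B$ to create the $\pm$-weighted pair of parallel lines) is the same as the paper's, and your ``proof sketch'' is compatible with the paper's actual argument once the details are filled in. But the details you flag as routine are precisely where the one non-obvious observation lives, and without it the sketch has a real gap. The claim that $(I_q\otimes F)\,D\,(I_q\otimes F^{-1})$ acts blockwise as a ``translation-type operator'' is not literally true: because $D_c$ has $\sqrt q$ (not $1$) in its first diagonal slot and $D_c'$ has $0$ there, one gets
\[
F D_c F^{-1} \;=\; \tfrac{1}{\sqrt q}\,S_c \;+\; (\text{const})\cdot J,
\qquad
F D_c' F^{-1} \;=\; \tfrac{1}{\sqrt q}\,S_c \;+\; (\text{const}')\cdot J,
\]
where $S_c$ is the shift and $J$ is the all-ones matrix. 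The $J$-term is rank one but it is supported on \emph{every} column, so if it survived, the rows of $W$ would not be sparse and would not look like characteristic vectors of line pairs at all. The paper's crucial maneuver is the observation that $BJ$ has all zero entries except in the last row (because every row of $B$ other than the last sums to zero); this is what annihilates the $J$-correction for the top $q/2$ rows of each block and leaves exactly $\frac{1}{\sqrt{2q}}(I_{q/2}\mid -I_{q/2})S_{ij}$, i.e.\ the paired-lines pattern. Your point~(iii) gestures at the zero in the first slot of $D_c'$, but the issue arises for $D_c$ as well, and the resolution is $BJ=0$, not anything internal to $D$.

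A smaller but genuine error in your ``Key steps'' narrative: the rows of $B\cdot F$ are not ``supported on arithmetic-progression-like / affine-subspace-like subsets of $\F_q$''. A row of $BF$ in the top half is $\tfrac{1}{\sqrt2}(F_r - F_{r+q/2})$, a \emph{dense} $\pm$-vector (it is a difference of two character vectors), not a sparse indicator. The sparsity only appears after the full conjugation $B\,(F D F^{-1})$, for the $BJ=0$ reason above. So the intermediate picture you propose is not the one the computation actually passes through; the paper's route---first compute $F D_c F^{-1}$ and $F D_c' F^{-1}$ explicitly, then left-multiply by $B$ and invoke $BJ=0$---is the cleaner decomposition and I'd recommend following it rather than trying to understand $BF$ on its own.
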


\begin{proof}
Let $S_c$ be the $q \times q$ permutation matrix $S_c$ that (when multiplied on the right) shifts columns, identified with $\F_q$, by the map $x \mapsto x+c$. Let $J$ be the all-ones matrix. The main observation is that \[FD_cF^{-1} = \frac{1}{\sqrt{q}}S_c - \frac{\sqrt{q} - 1}{q}J,\] and that \[FD_c'F^{-1} = \frac{1}{\sqrt{q}}S_c - \frac{1}{\sqrt{q}}J.\] Thus the final matrix has in its $(i,j)$ block (with $i,j \in \F_q$) the matrix
\[B\cdot\left (\frac{1}{\sqrt{q}}S_{ij} -\frac{\sqrt{q}-1}{q}J\right )\]
if $i = j$, and
\[B\cdot\left (\frac{1}{\sqrt{q}}S_{ij} - \frac{1}{\sqrt{q}}J\right )\]
otherwise. Observe that $BJ$ has all zero entries except for the last row, so in particular, the first $q/2$ rows of the $(i, j)$ block are $(1/\sqrt{2q})(I_{q/2} | -I_{q/2})S_{ij}$. Therefore the $q/2$ rows of the entire $q^2 \times q^2$ matrix corresponding to the top halves of blocks $(i, j)$ as $j$ varies, give the vectors $v(i, b)$ for $b \in B_1$, if we identify columns with $\F_q \times \F_q$ as follows: columns of the $j$-th block are identified with $\{j\} \times \F_q$, and within the $j$-th block, $B_1$ is the first $q/2$ columns and $B_2$ is the next $q/2$ columns (and the bijection $\phi$ maps the element associated with the $b$-th column to the element associated with the $(b+q/2)$-th column).

Then, as $i$ varies over $\F_q$, we find all of the vectors from Eq. (\ref{eq:paired-lines-construction}) as the ``top-halves'' of each successive set of $q$ rows of the large matrix.
\end{proof}

\section{Putting it all together}
\label{sec:putting-it}

Let $A$ be the matrix of Theorem \ref{thm:local-unitary-construction}, and set $N = q^2$ and $M = N/2$. By Theorem \ref{thm:alg}, there is a BQLOGTIME algorithm that distinguishes $D_{A, M}$ from the the uniform distribution $U_{2N}$.

By Lemma \ref{lem:paired-lines-are-design}, the first $M$ rows of $A$ have supports forming a $(2\sqrt{N}, 4)$-design $\cal D$. It is also clear that for $i \le M$, the $(N + i)$-th bit of $D_{A, M}$ computes {\sc majority} (with a fixed pattern of inputs negated) on those among the first $N$ bits that lie in $S(A, i)$. Thus $D_{A, M}$ is exactly the distribution $(U_N, NW_{\cal D}^{\mbox{\small \sc majority}}(U_N))$ followed by $N/2$ additional independent random bits (which can have no impact on the distinguishability of the distribution from uniform). Thus by Conjecture \ref{conj:ours}, no constant-depth, polynomial-size circuit can distinguish $D_{A, M}$ from $U_{2N}$, which completes the argument.

We briefly describe why the standard NW argument fails (and why we must rely on Conjecture \ref{conj:ours}). The standard argument proceeds as follows: define $2N+1$ hybrid distributions $D_{A, M} = H_0, H_1, \ldots, H_{2N} = U_{2N}$, that interpolate between $D_{A, M}$ and $U_{2N}$. Given a distinguishing circuit $C:\{0,1\}^{2N} \rightarrow \{0,1\}$ for which
\[|\Pr[C(D_{A, M}) = 1] - \Pr[C(U_{2N})=1]| \ge \epsilon,\]
we argue that for some $i$
\[|\Pr[C(H_i) = 1] - \Pr[C(H_{i+1})=1]| \ge \epsilon/M\]
by the triangle inequality (and here we are making the additional observation that $H_0 = H_1 = \cdots =  H_{N}$ and $H_{N + M+1} = H_{N + M+2} = \cdots = H_{2N}$ so the gap of $\epsilon$ must be spread over only $M$ differences). From here, we obtain a next-bit-predictor with advantage $\epsilon/M$ and hardwire at most $M$ lookup tables of size $2^p$, to obtain a circuit of size $|C| + O(2N) + O(2^pM)$ that computes {\sc majority} (on $2\sqrt{N}$ bits) with success probability $1/2 + \epsilon/M$. The problem is that this advantage over random guessing is not sufficient to obtain a contradiction for the function {\sc majority}, which can be computed easily with success probability $1/2 + \Omega(N^{1/4})$, for the parameters coming from the unitary $A$ from Theorem \ref{thm:local-unitary-construction}.

Even if we had a unitary whose rows formed an $(\ell, p)$-design with better parameters, the standard argument fails. This is because it must be that $\ell \le N$, and yet we must also have $M \gg \sqrt{N}$ for $D_{A, M}$ to be even {\em statistically} noticably different from uniform. But the trivial circuit that outputs an arbitrary bit of the input succeeds with probability $1/2 + \Omega(1/\sqrt{\ell})$ which is larger than the $1/2 + \epsilon/M$ that comes out of the standard NW argument above.

\section{Our conjecture: discussion}
\label{sec:discussion}

We believe that Conjecture \ref{conj:ours} is quite approachable, given the large literature and variety of proof techniques concerning pseudorandom generators and related objects. As examples, we mention two ideas from the literature that seem relevant (although obviously they haven't yet led to a solution).

The first is the analysis by Sudan, Trevisan, and Vadhan \cite{STV01} of the NW PRG when applied to a ``mildly hard'' predicate (i.e., one for which small circuits fail on only a $\delta$ fraction of the inputs). They prove that the output distribution is computationally indistinguishable from a distribution having high entropy by invoking Impagliazzo's hard-core lemma \cite{I95}, and arguing that output bits of the NW PRG ``often'' fall in a hard core that is considerably harder on average than the original mildly hard predicate.

We also have a hard predicate whose average-case hardness falls short of what we would need for Conjecture \ref{conj:ours} to be true via the standard argument; i.e., if {\sc majority} on $\ell$ bits were $1/2 + 1/\poly(\ell)$ hard, we would be done. The high-level message of Sudan, Trevisan and Vadhan is that this hardness can be achieved (essentially) at the price of comparing to a high-entropy distribution rather than the uniform distribution. Our BQP algorithm is fairly robust and would likely still work on a sufficiently high entropy distribution (it is only necessary to ``kill'' correlations with a particular element of the Hadamard basis). However, the central technical component of the proof in \cite{STV01} is the Impagliazzo hard-core lemma \cite{I95}, and a sufficiently strong hardcore lemma is not known for $AC_0$. In fact, the function {\sc majority} has no hard core:
\begin{proposition}
No subset of {\sc majority} is $\epsilon$-hardcore for $AC_0$, for any $\epsilon < 1/n$.
\end{proposition}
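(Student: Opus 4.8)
The plan is to prove the equivalent, slightly more concrete statement that for every nonempty $H\subseteq\zo^n$ there is an $O(1)$-size $AC_0$ circuit $C$ with $\Pr_{x\in H}[C(x)=\mbox{\small \sc majority}(x)]\ge 1/2+1/n$, which rules out $H$ being $\epsilon$-hardcore for any $\epsilon<1/n$. I will first assume $n$ is odd (the even case is an easy addendum). For $x\in\zo^n$ write $w(x)$ for its Hamming weight, so $\mbox{\small \sc majority}(x)=1$ iff $w(x)>n/2$, and $|w(x)-n/2|\ge 1/2$ always. As a warm-up that isolates the difficulty, consider the $n$ dictator circuits $x\mapsto x_i$: for a fixed $x$ exactly $\max(w(x),n-w(x))=n/2+|w(x)-n/2|$ of them agree with $\mbox{\small \sc majority}(x)$, so $\sum_{i}\Pr_{x\in H}[x_i=\mbox{\small \sc majority}(x)]=n/2+E_{x\in H}[|w(x)-n/2|]\ge (n+1)/2$, and hence the best dictator has advantage $\ge 1/(2n)$. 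This is a factor of two short of the target, and it is essentially tight (for instance when $H$ concentrates on the two middle weight levels), so a slightly cleverer --- but still $O(1)$-size --- circuit is needed.

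The circuit I would use is $C_T(x)=\mbox{\small \sc majority}(x_{|T})$ for a set $T\subseteq[n]$ of odd size $k=O(1)$; as a function of $k$ bits it has a DNF of size $2^{O(k)}=O(1)$, hence is $AC_0$. Picking $T$ uniformly at random among size-$k$ sets and swapping the order of expectations,
\[E_T\Big[\Pr_{x\in H}[C_T(x)=\mbox{\small \sc majority}(x)]\Big]=E_{x\in H}\Big[\Pr_T[\mbox{\small \sc majority}(x_{|T})=\mbox{\small \sc majority}(x)]\Big],\]
so it suffices to lower bound, for each fixed $x$, the probability $\Pr_T[\mbox{\small \sc majority}(x_{|T})=\mbox{\small \sc majority}(x)]$. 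The number of ones among the sampled coordinates is hypergeometric, $Y=\mathrm{Hyp}(n,w(x),k)$, and $\mbox{\small \sc majority}(x_{|T})=1$ iff $Y\ge(k+1)/2$. Because $\mathrm{Hyp}(n,w,k)$ is stochastically increasing in $w$ and $\mathrm{Hyp}(n,n-w,k)$ has the law of $k-\mathrm{Hyp}(n,w,k)$, this agreement probability is minimized precisely at the central levels $w=(n\pm1)/2$, where it equals $1/2+\delta_k$ with $\delta_k:=\Pr[\mathrm{Hyp}(n,(n+1)/2,k)\ge(k+1)/2]-1/2$. A short duality/monotonicity argument shows $\delta_k>0$ (the distribution $\mathrm{Hyp}(n,(n+1)/2,k)$ is a balanced binomial-type distribution whose mean is shifted up by $k/(2n)$), and a central-limit estimate with the usual continuity correction gives $\delta_k\sim c_k/n$ with $c_k=\tfrac k2\binom{k-1}{(k-1)/2}2^{-(k-1)}$, which exceeds $1$ for a suitable odd constant $k$ (e.g.\ $k=7$). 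Hence $E_{x\in H}[\cdots]\ge 1/2+1/n$, so some fixed $T$ yields the desired $AC_0$ circuit $C_T$.

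For even $n$, fix a tie-breaking convention, say ties count as $1$. If $\Pr_{x\in H}[w(x)=n/2]\ge 1/2+1/n$, the constant-$1$ circuit already has advantage $\ge 1/n$. Otherwise $\Pr_{x\in H}[w(x)\ne n/2]>1/2-1/n$: the sub-majority argument applied to the conditional distribution on $\{x\in H:w(x)\ne n/2\}$ (where now $|w(x)-n/2|\ge1$) gives per-$x$ advantage $\Theta(\sqrt k/n)$ there, while on the level $w=n/2$ an odd-size sub-majority equals $1$ with probability exactly $1/2$ and so contributes nothing; thus $E_T[\text{advantage of }C_T\text{ on }H]\ge(1/2-1/n)\cdot\Theta(\sqrt k/n)\ge 1/n$ once $k$ is a large enough constant. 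In all cases $H$ has an $AC_0$ predictor of $\mbox{\small \sc majority}$ with advantage $\ge 1/n$, so it is not $\epsilon$-hardcore for $\epsilon<1/n$.

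I expect the main obstacle to be pinning down the constant rather than merely an order of magnitude: the dictator calculation shows the most obvious $AC_0$ predictors fall a factor of two short, so one must bring in a slightly less obvious $O(1)$-size circuit and estimate the hypergeometric tail carefully enough to certify the bound $1/n$ uniformly over all weight levels (with the worst case at the two central levels). The even-$n$ tie case is the only other wrinkle, and it is dispatched by the constant function.
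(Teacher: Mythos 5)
Your proof is correct and takes a genuinely more careful route than the paper's. The paper's own argument is exactly the one you dismiss as a warm-up: pick a uniformly random input index $i$ and output $x_i$; on any fixed $x$ this agrees with {\sc majority} with probability $\max(w(x),n-w(x))/n$, so after fixing the randomness some single dictator achieves at least that advantage over any subset $H$. The paper asserts this probability is $\ge 1/2 + 1/n$, but as you correctly note, the worst case (odd $n$, $w(x) = (n\pm 1)/2$) gives only $(n+1)/(2n) = 1/2 + 1/(2n)$, so the dictator argument is a factor of two short of the constant stated in the proposition. Your replacement circuit --- {\sc majority} restricted to a random size-$k$ subset $T$ for an odd constant $k$, followed by fixing $T$ via averaging --- actually achieves the stated bound: your first-order hypergeometric expansion is right (the quadratic corrections cancel and one gets $\delta_k = c_k/n + O(1/n^2)$ with $c_k = k\binom{k-1}{(k-1)/2}2^{-k}$, so $c_7 = 35/32 > 1$), and your even-$n$ dichotomy is sound as well. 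The trade-off: the paper's proof is a one-liner whose constant is off by a factor of two, which is harmless for its purpose (the proposition is only invoked to show that {\em some} $O(1)$-size circuit has $\Omega(1/n)$ advantage on every $H$, ruling out a hard-core-lemma approach); yours pins down the constant $1/n$ exactly as stated, at the cost of a nontrivial hypergeometric tail estimate and a little case analysis.
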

\begin{proof}
Given a $x \in \{0,1\}^n$, the randomized procedure that picks a random one of the $n$ input bits and outputs it succeeds in computing $\mbox{\sc majority}(x)$ with probability at least $1/2 + 1/n$. This procedure has the same success probability over any subset $S \subseteq \{0,1\}^n$. For any fixed $S$, there is a fixing of the random bits that preserves this success probability, and which results in a circuit of size $1$ (it just outputs $x_i$ for some fixed $i$).
\end{proof}
\noindent Nevertheless, it may be that replacing the uniform distribution with a high minentropy one can be useful in circumventing the loss from the hybrid argument.

The second approach is to directly circumvent the loss due to the hybrid argument. This is explicitly addressed in \cite{BSW03}, where they show that the loss can indeed be avoided in certain computational models. One of these models is ``PH circuits,'' which sounds superficially like it might be relevant to our setting. What is actually needed to use their ideas is the ability to approximately count an efficiently recognizable set, in the same class that recognizes the set. Such a statement is not known (or expected) for $AC_0$, but it is still possible that other ideas could circumvent the hybrid argument for $AC_0$.

However, any route to proving Conjecture \ref{conj:ours} faces the same challenge discussed in \cite{A}: the proof must be ``non-black-box'' in the sense that it can't apply to arbitrary low-degree polynomial functions in addition to its native Boolean setting. This is because the quantum algorithm of Theorem \ref{thm:alg} implies (via \cite{BBCMW01}) the existence of a constant-degree, multivariate real polynomial computing the acceptance probability (and hence distinguishing the NW distribution from uniform). A black-box reduction would transform a distinguisher of this form to a similarly low-degree polynomial approximating {\sc majority}, but we know that no such polynomial for approximating {\sc majority} can exist \cite{S93}. So any proof of Conjecture \ref{conj:ours} must prove that the distribution in question fools $AC_0$ in some way that does {\em not} replace $AC_0$ circuits by low degree approximating polynomials and then argue about those.

Here are some ideas that could plausibly form the basis of a proof of Conjecture \ref{conj:ours}. We consider the simpler situation in which the distributions being compared are $N^2$ independent copies of the random variable $D$ -- where $D = (U_N, {\mbox{\sc majority}}(U_N))$  -- and $N^2$ independent copies of the random variable $U_{N+1}$ distributed uniformly on $N+1$ bits. This corresponds to the NW construction we have been working with, if the underlying nearly-disjoint sets are taken to be {\em completely disjoint}. $AC_0$ should be incapable of distinguishing these distributions; here is the intuition. First, observe that there are no correlations between blocks, so the hypothetical distinguisher must examine each block separately. Since $AC_0$ cannot approximate majority well, we know that the only ``accessible'' information about each block is a ``noisy bit'' saying whether it is distributed according to $D$ or $U_{N+1}$ -- in the case of uniform, this bit is $1$ with probability $1/2$, and in the case of distribution $D$, this bit is $1$ with probability $1/2 + \Theta(1/\sqrt{N})$. How can a hypothetical distinguisher aggregate these noisy bits across the $N^2$ independent copies? In one case, the expected sum of these noisy bits $(1/2)N^2$ and in the other case it is $(1/2 + \Theta(1/{\sqrt{N}}))N^2$, and by concentration of measure, the sum is highly likely to be close to these expectations. So the hypothetical distinguisher only needs to tell the difference between $N^2$ fair coin flips versus $N^2$ slightly biased coin flips. But exactly this task is hard for $AC_0$ (which can be seen by reduction from {\sc majority}, as written down in Corollary 12 of \cite{A}). So, it seems that either the distinguisher must approximate {\sc majority} better than allowed (to get less noisy bits), or it must be detecting very small bias in a sequence of coin
flips. In upcoming work \cite{FSUV10}, we are able to show that indeed $AC_0$ cannot distinguish these two distributions. This is encouraging because it shows that the aforementioned ``non-black-box'' requirement is not insurmountable. Extending this result to the not-completely-disjoint case still seems challenging, however.

\medskip
\medskip

\paragraph{Acknowledgements.} We thank Scott Aaronson, Yi-Kai Liu, and Emanuele Viola for helpful discussions.

\newcommand{\etalchar}[1]{$^{#1}$}

\appendix
\section{Omitted proofs}
\label{app:omitted-proofs}

\begin{proof}(Of Lemma \ref{lem:block-diagonal})
Fix a finite universal set of quantum gates, of cardinality $d$, each of which operates on at most $b$ qubits. A convenient notion will be that of an {\em oblivious} circuit, in which we fix an ordering (say, lexicographic) on $[n]^b$, and the steps of the circuit are identified with $\poly(n)$ cycles through this list: when we are on step $(a_1, a_2, \ldots, a_b) \in [n]^b$ in one of these cycles, we operate on qubits $a_1, a_2, \ldots, a_b$. Clearly, any (uniform) quantum circuit can be converted to a (uniform) ``oblivious'' circuit with at most an $n^b$ blowup by inserting dummy identity gates.

Let $n^k$ be an upper bound on the size of the oblivious circuits obtained in this way for the various $U_i$. The circuit for each $U_i$ is now a sequence \[j^{(i)} = \left (j^{(i)}_1, j^{(i)}_2, j^{(i)}_3, \ldots, j^{(i)}_{n^k}\right ),\]
with each $j^{(i)}_\ell \in [d]$ specifying which gate to apply at step $\ell$ in the oblivious circuit for $U_i$ (and because the circuit is oblivious, the qubits to which this gate should be applied are easily determined from $\ell$). Let $f:[M] \rightarrow [d]^{n^k}$ be the function that maps $i$ to the vector $j^{(i)}$.

Now we describe the promised efficient quantum procedure:
\begin{enumerate}
\item Apply the map derived from $f$ that takes $|i\rangle |z \rangle$ to $|i\rangle|z \oplus f(i)\rangle$, to the first and third register. We view the contents of the third register as a vector in $[d]^{n^k}$.

\item Repeat for $\ell = 1,2, 3, \ldots, n^k$: apply the ``controlled unitary'' that consults the $\ell$-th component of the third register, and applies the specified gate to qubits $(a_1, a_2, \ldots, a_b)$ of the second register (again, $(a_1, a_2, \ldots, a_b)$ are easily determined from $\ell$ because the circuit is oblivious). The important observation is that this ``controlled unitary'' operates on only constantly many qubits.

\item Repeat step 1 to uncompute the auxiliary information in the third register.
\end{enumerate}
\end{proof}

\section{A unitary in which all rows participate}
\label{app:all-rows}

There is a tension between the triple goals of (1) having many pairwise orthogonal vectors, (2) maintaining bounded pairwise intersections of the supports, and (3) having the supports large. It is natural to wonder whether the above construction (in which we found a number of vectors equal to $1/2$ the dimension of the underlying space) is in some sense optimal. For example, is there some barrier to simultaneously optimizing all three goals?

Here we show that one can indeed optimize all three goals at the same time, by specifying a construction that builds on the ``paired-lines'' construction. Our construction will have as many pairwise orthogonal vectors as the dimension of the underlying space (which is obviously as many as is possible); it will have intersections sizes bounded above by $2$ (the upper bound cannot be 0 without constraining the product of the number of rows and the support sizes to be at most the dimension of the underlying space, and no pairwise intersections can have cardinality one without violating orthogonality); the support sizes will be at least the square root of the dimension of the underlying space (and one can't exceed that without having larger intersection sizes).

This construction is not needed for our main results, but we find it aesthetically pleasing that one can optimize all three parameters in this way. We {\em don't} know of a local decomposition for this matrix, and we leave finding one as an intriguing open problem.

While the construction of Section \ref{sec:paired-lines} needed characteristic two, the present construction needs odd characteristic. We fix $\F_q$ with $q$ an odd prime power, and we choose a subset $Q \subseteq \F_q^*$ of size $(q-1)/2$ for which $Q \cap -Q = \emptyset$, where $-Q = \{-x : x \in Q\}$. Our vectors will have $q^2 - 1$ coordinates, identified with the {\em punctured plane} $P = \F_q \times \F_q \setminus \{(0,0)\}$.

We have three types of vectors in $\{0,-1,+1\}^P$: first, for all $a \in \F_q$ and $b \in Q$
\begin{equation}
v_{a, b}[x,y] = \left \{ \begin{array}{ll}
+1 & x = 0, y = b \\
+1 & x \in Q, y = ax + b \\
-1 & x \in Q, y = ax - b \\
0 & \mbox{otherwise}
\end{array}\right .,
\label{eq:type-1}
\end{equation}
second, for all $a \in \F_q$ and $b \in -Q$
\begin{equation}
v_{a, b}[x,y] = \left \{ \begin{array}{ll}
+1 & x = 0, y = b \\
+1 & x \in -Q, y = ax + b \\
-1 & x \in -Q, y = ax - b \\
0 & \mbox{otherwise}
\end{array}\right .,
\label{eq:type-2}
\end{equation}
and finally, for each $c \in \F_q^*$
\begin{equation}
u_{c}[x,y] = \left \{ \begin{array}{ll}
+1 & x = c, y \in \F_q \\
0 & \mbox{otherwise}
\end{array}\right . .
\label{eq:type-3}
\end{equation}

\begin{lemma}
The vectors defined in Eqs. (\ref{eq:type-1}), (\ref{eq:type-2}) and (\ref{eq:type-3}) are pairwise orthogonal and their supports form a $(q, 2)$-design.
\end{lemma}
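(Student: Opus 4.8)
The plan is to mirror the structure of the proof of Lemma~\ref{lem:paired-lines-are-design}, handling the extra vector types that appear in odd characteristic. There are essentially three things to verify: (i) each vector is supported on exactly $q$ coordinates; (ii) any two distinct vectors are orthogonal; (iii) any two distinct supports intersect in at most $2$ points. The support-size count (i) is a direct count: a type-1 or type-2 vector $v_{a,b}$ has one nonzero at $(0,b)$, plus $|Q| = (q-1)/2$ entries on the line $y = ax+b$ restricted to $x \in Q$ (resp. $x \in -Q$), plus $(q-1)/2$ entries on $y = ax-b$, totalling $1 + (q-1)/2 + (q-1)/2 = q$; note that on these the two branches $y = ax \pm b$ are disjoint since $b \ne -b$ in odd characteristic and $b \ne 0$. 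A type-3 vector $u_c$ is the full vertical line $x = c$, which has $q$ points, and $(0,0) \notin P$ does not lie on it since $c \ne 0$. So all supports have size exactly $q$, and with $q^2 - 1$ pairwise orthogonal vectors in a $q^2-1$-dimensional space we get a full orthogonal family (an ``all rows participate'' unitary).

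For orthogonality (ii) I would split into cases by the types of the two vectors. Two type-3 vectors $u_c, u_{c'}$ with $c \ne c'$ have disjoint supports (parallel distinct vertical lines), hence are trivially orthogonal and disjoint. A type-3 vector $u_c$ against a type-1 or type-2 vector $v_{a,b}$: the vertical line $x = c$ (with $c \ne 0$, so $c \in Q$ or $c \in -Q$) meets the support of $v_{a,b}$ only if $c$ lies in the relevant half $Q$ (resp. $-Q$), in which case it hits exactly the two points $(c, ac+b)$ and $(c, ac-b)$, contributing $(+1)(+1) + (+1)(-1) = 0$ to the inner product, and intersecting in exactly $2$ points. (If $c$ is in the wrong half, the intersection is empty.) The main case is two vectors $v_{a,b}, v_{a',b'}$ of type 1 and/or type 2. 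If they have the same slope $a = a'$ one checks that on each fixed $x$ the $y$-coordinates used are $\{ax+b, ax-b\}$ versus $\{ax+b', ax-b'\}$, and since $b, b' \in Q \cup (-Q)$ and the sets $Q, -Q$ partition $\F_q^*$, distinctness of the vectors forces these pairs to be disjoint for every $x$ (and the $x=0$ points $(0,b), (0,b')$ differ), so the supports are disjoint. If $a \ne a'$, then mimicking Lemma~\ref{lem:paired-lines-are-design} there are up to four candidate intersection points, one from each pair of branches $y = ax \pm b$ with $y = a'x \pm b'$; each is obtained by solving a linear equation, and the $x = 0$ points contribute as well only if $b = b'$. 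The key additional subtlety here is the restriction $x \in Q$ or $x \in -Q$: a candidate intersection point counts only when its $x$-coordinate lies in the common ``allowed'' set for both vectors. I would argue that the surviving points still come in sign-cancelling pairs: the involution $b \mapsto -b$ (equivalently, swapping the two branches of one vector) pairs up the four solutions into two pairs with opposite inner-product contributions, and the allowed-set restriction is symmetric under this involution (since $x \in Q \iff x \in Q$, independent of $b$), so either both members of a pair survive or neither does. In every case the contributions sum to zero, and at most $2$ points survive, giving (iii) simultaneously.

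I expect the main obstacle to be the bookkeeping in the mixed type-1/type-2 case with $a \ne a'$: one must carefully track, for each of the four branch-pair intersections, whether the solved $x$-coordinate lands in $Q$, in $-Q$, or in neither, and confirm that the pattern of surviving points is always sign-balanced \emph{and} of size at most $2$. The cleanest way to organize this is to observe that each $v_{a,b}$ can be written as a difference $\chi_{L^+_{a,b}} - \chi_{L^-_{a,b}}$ of characteristic vectors of two ``half-lines'' (plus the anchor point $(0,b)$ which I would fold into $L^+$), so that $\langle v_{a,b}, v_{a',b'}\rangle$ expands into four terms each counting $|L^{\pm}_{a,b} \cap L^{\pm}_{a',b'}|$; then I just need that for $a \ne a'$ these four intersection sizes are equal in the pattern $(s, t, t, s)$ (so the alternating sum telescopes to zero) — and that $s, t \le 1$ so the total support overlap is $\le 2$. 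The equalities $s$-with-$s$ and $t$-with-$t$ follow from the symmetry $b \leftrightarrow -b$ together with the fact that the half-plane constraint ($x \in Q$ vs.\ $x \in -Q$) is invariant under that symmetry. Once that structural observation is in place the rest is the routine linear-algebra-over-$\F_q$ computation already modeled in Lemma~\ref{lem:paired-lines-are-design}, and the $(q,2)$-design claim drops out along with orthogonality.
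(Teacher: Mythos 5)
Your set-up, the support-size count, and the disjoint cases (type-3 vs.\ type-3, mixed type-1/type-2, and equal slopes $a=a'$) are all fine and match the paper's approach. The problem is in the key structural observation you rely on for the main case ($a \ne a'$, both vectors of the same type).

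You claim the four intersection counts $|L^{\pm}_{a,b} \cap L^{\pm}_{a',b'}|$ fall into a pattern that is forced by ``the symmetry $b \leftrightarrow -b$ together with the fact that the half-plane constraint ($x\in Q$ vs.\ $x\in -Q$) is invariant under that symmetry,'' and earlier you assert that under flipping one vector's branch ``the allowed-set restriction is symmetric \ldots\ since $x \in Q \iff x \in Q$, independent of $b$.'' That is not true: the $x$-coordinate of the candidate intersection is \emph{not} independent of $b$. Solving $ax+b=a'x+b'$ gives $x=(b'-b)/(a-a')$, while $ax-b=a'x+b'$ gives $x=(b'+b)/(a-a')$. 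Flipping only the first vector's branch sends the first $x$ to the second, and these are generically unrelated: one can easily be in $Q$ while the other is not (e.g.\ $q=7$, $Q=\{1,2,3\}$, $a-a'=1$, $b=2$, $b'=1$ gives $x_{++}=6\notin Q$ but $x_{-+}=3\in Q$). So neither the $(s,t,t,s)$ pattern nor the ``both members of a pair survive or neither does'' claim holds, and the proposed telescoping collapses.

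The correct pairing — which is what the paper's proof of this lemma actually does — flips \emph{both} branches simultaneously: pair $(+,+)$ with $(-,-)$ and $(+,-)$ with $(-,+)$. Within each such pair the two contributions have the \emph{same} sign ($+1,+1$ for the first pair, $-1,-1$ for the second), and the two $x$-coordinates are exact negations of one another (e.g.\ $x_{++}=(b'-b)/(a-a')$ and $x_{--}=(b-b')/(a-a')=-x_{++}$). For a nonzero $x$ exactly one of $\{x,-x\}$ lies in $Q$, so exactly one point of each pair survives, giving a single $+1$ and a single $-1$, hence orthogonality and overlap exactly $2$. The anchor point $(0,b)$ is what covers the degenerate subcase $b=b'$, where $x_{++}=x_{--}=0$ is excluded from $Q$ but the coinciding anchors supply the $+1$, while the $(+,-)/(-,+)$ pair still contributes the $-1$. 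You also need the small observation that $b+b'\neq 0$ (which holds because $b,b'$ lie in the same half $Q$ or $-Q$) to ensure the $(+,-)/(-,+)$ pair has nonzero $x$. Once you replace your involution by this ``flip both branches'' pairing, the rest of your outline goes through and matches the paper.
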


\begin{proof}
It is an easy computation to see that the support of each of the vectors has cardinality $q$. We now argue that they are pairwise orthogonal. There are several cases depending on the two rows under consideration:
\begin{enumerate}
\item $v_{a,b}$ and  $v_{a', b'}$: if one comes from Eq. (\ref{eq:type-1}) and the other from Eq. (\ref{eq:type-2}) then the supports are disjoint. So we assume both come from Eq. (\ref{eq:type-1}) or both come from Eq. (\ref{eq:type-2}).
\begin{enumerate}
    \item Both come from Eq. (\ref{eq:type-1}) and $b = b'$: we have one intersection $(0, b)$ (which contributes $+1$ to the inner product) and exactly one of the following two intersection points: $(x = -2b/(a - a'), ax+b = a'x - b)$ or $(x = 2b/(a - a'), ax-b = a'x +b)$, which contributes $-1$ to the inner product. We have exactly one because the two $x$-values are negations of each other, and non-zero, so exactly one is in $Q$. \label{case:1a}
    \item Both come from Eq. (\ref{eq:type-1}) and $b \ne b'$: we have exactly one of the following two intersection points: $(x = (b' - b)/(a - a'), ax+b = a'x + b')$ or $(x = (-b' + b)/(a - a'), ax-b = a'x-b')$, which contributes $+1$ to the inner product, and exactly one of the following two intersection points: $(x = (b' + b)/(a - a'), ax-b = a'x + b')$ or $(x = (-b' - b)/(a - a'), ax+b = a'x-b')$, which contributes $-1$ to the inner product. For each pair, there is exactly one of the pair of possible intersection points because the two $x$-values are negations of each other, and non-zero, so exactly one is in $Q$. \label{case:1b}
    \item Both come from Eq. (\ref{eq:type-2}) and $b = b'$: identical to case (\ref{case:1a}) above, with $-Q$ in place of $Q$.
    \item Both come from Eq. (\ref{eq:type-2}) and $b \ne b'$: identical to case (\ref{case:1b}) above, with $-Q$ in place of $Q$.
\end{enumerate}
\item $u_c$ and $u_c'$: these have disjoint supports for $c \ne c'$.
\item $v_{a,b}$ and $u_c$: if $c \in Q$, then the support of $u_c$ intersects the support of $v_{a,b}$ only if $v_{a,b}$ comes from Eq. (\ref{eq:type-1}), and then we get one intersection at point $(x = c, ax + b)$ which contributes a $+1$ to the inner product, and one intersection at point $(x=c, ax-b)$ which contributes a $-1$ to the inner product. If $c \in Q$, then the support of $u_c$ intersects the support of $v_{a,b}$ only if $v_{a,b}$ comes from Eq. (\ref{eq:type-2}), and we have an identical argument, with $-Q$ in place of $Q$.
\end{enumerate}
This is a complete enumeration of cases, and in no case did we have more than 2 intersection points.
\end{proof}

We conclude this section with a question: are these matrices related in some way to the DFT matrix over some family of non-abelian groups (e.g. the affine group $\F_q^* \ltimes \F_q$), or are they indeed completely different from the unitaries seen before in quantum algorithms?

\section{Converting a distributional oracle problem into a standard oracle}
\label{app:converting}

We include this section for completeness, a similar proof appears in \cite{A2010}.

We have two ensembles of random variables $D_1 = \{D_{1, n}\}, D_2 = \{D_{2, n}\}$ over $(N = 2^n)$-bit strings for which BQLOGTIME can distinguish the two distributions but $AC_{0}$ cannot. Then when $D_1$ and $D_2$ are viewed as distributions on (truth-tables of) {\em oracles}, there is a BQP oracle machine that distinguishes the two distributions, but no $PH$ oracle machine can distinguish them. Specifically, we have that there exists a BQP oracle machine $A$ for which
\[\Pr[A^{D_1}(1^n)=1] - \Pr[A^{D_2}(1^n)=1] \ge \epsilon\]
while for every $PH$ oracle machine $M$,
\[\Pr[M^{D_1}(1^n)=1] - \Pr[M^{D_2}(1^n)=1] \le \delta < \epsilon,\]
(here we use standard techniques -- see, e.g., \cite{Has87} -- which show that on any fixed input, the output of a $PH$ machine as a function of the oracle can be seen as an $AC_{0}$\footnote{Recall that we are using ``$AC_0$'' to refer to size $\exp(\poly \log n)$-size constant depth circuits in this paper.} circuit)
and we have $\epsilon > \delta$ for sufficiently large $n \ge n_0$.

We now convert the distributions on oracles into a single oracle $O$ for which $BQP^O \not\subseteq PH^O$. Let $L$ be a uniformly random unary language in $\{1\}^*$.  For each $n$, if $1^n \in L$, sample a $2^n$-bit string $x$ from $D_1$ and define oracle $O$ restricted to length $n$ so that $x$ is its truth table; otherwise sample a $2^n$-bit string $x$ from $D_2$ and define oracle $O$ restricted to length $n$ so that $x$ is its truth table.

First, note that
\begin{eqnarray*}
\Pr[A^O(1^n) = L(1^n)] & = & (1/2)\cdot \Pr[A^{D_1}(1^n) = 1] + (1/2)\cdot \Pr[A^{D_2}(1^n) = 0] \ge 1/2 + \epsilon/2.
\end{eqnarray*}

Now fix any PH machine $M$, and note that for sufficiently large $n$,
\begin{eqnarray*}
\Pr[M^O(1^n) = L(1^n)] & = & (1/2)\cdot \Pr[M^{D_1}(1^n) = 1] + (1/2)\cdot \Pr[M^{D_2}(1^n) = 0] \le 1/2 + \delta/2.
\end{eqnarray*}

Consequently, since $\eps > \delta$ there is a fixed choice for the oracle at length $n$ such that $L(1^n) = A^O(1^n) \ne M^O(1^n)$, for sufficiently large $n$.

Fix such a choice for the oracle at length $n$, and consider another PH machine $M'$. By the same argument, we can find another sufficiently large input length $n'$ where $L(1^{n'}) = A^O(1^{n'}) \ne M^O(1^{n'})$.\footnote{We have assumed that our machines, on an input of length $n$, only query the oracle at inputs of length $n$. This can be ensured by working with input lengths that are sufficiently spread out (so that the machine cannot afford to formulate queries to the next largest length, and so that the oracle at shorter lengths can be hardcoded.)}

Continuing in this way, we obtain a single oracle such that for any PH machine $M$ there exists some $n$ for which $A^O(1^n) \ne M^O(1^n)$.


\begin{thebibliography}{BHC{\etalchar{+}}01}

\bibitem[Aar10a]{A2010}
S.~Aaronson.
\newblock A counterexample to the {Generalized Linial-Nisan Conjecture}.
\newblock In {\em ECCCTR: Electronic Colloquium on Computational Complexity,
  technical reports}, number 109, 2010.

\bibitem[Aar10b]{A}
Scott Aaronson.
\newblock {BQP} and the polynomial hierarchy.
\newblock In Leonard~J. Schulman, editor, {\em STOC}, pages 141--150. ACM,
  2010.

\bibitem[BHC{\etalchar{+}}01]{BBCMW01}
R.~Beals, H.Buhrman, R.~Cleve, M.~Mosca, and R.~de~Wolf.
\newblock Quantum lower bounds by polynomials.
\newblock {\em J. ACM}, 48(4):778--797, 2001.

\bibitem[BSW03]{BSW03}
B.~Barak, R.~Shaltiel, and A.~Wigderson.
\newblock Computational analogues of entropy.
\newblock In Sanjeev Arora, Klaus Jansen, Jos{\'e} D.~P. Rolim, and Amit Sahai,
  editors, {\em RANDOM-APPROX}, volume 2764 of {\em Lecture Notes in Computer
  Science}, pages 200--215. Springer, 2003.

\bibitem[BV93]{BV93}
E.~Bernstein and U.~V. Vazirani.
\newblock Quantum complexity theory.
\newblock In {\em STOC}, pages 11--20, 1993.

\bibitem[FSUV10]{FSUV10}
B.~Fefferman, R.~Shaltiel, C.~Umans, and E.~Viola.
\newblock On beating the hybrid argument.
\newblock Submitted, 2010.

\bibitem[GM84]{GM84}
S.~Goldwasser and S.~Micali.
\newblock Probabilistic encryption.
\newblock {\em J. Comput. Syst. Sci.}, 28(2):270--299, 1984.

\bibitem[H{\r{a}}s87]{Has87}
Johan H{\r{a}}stad.
\newblock {\em Computational limitations of small-depth circuits}.
\newblock MIT Press, 1987.

\bibitem[HR03]{HR03}
T.~Hartman and R.~Raz.
\newblock On the distribution of the number of roots of polynomials and
  explicit weak designs.
\newblock {\em Random Struct. Algorithms}, 23(3):235--263, 2003.

\bibitem[Imp95]{I95}
R.~Impagliazzo.
\newblock Hard-core distributions for somewhat hard problems.
\newblock In {\em FOCS}, pages 538--545, 1995.

\bibitem[KSV02]{ASV}
A.Y Kitaev, A.H Shen, and M.N Vyalyi.
\newblock {\em Classical and Quantum Computation}.
\newblock AMS, 2002.

\bibitem[Nis92]{N92}
N.~Nisan.
\newblock Pseudorandom generators for space-bounded computation.
\newblock {\em Combinatorica}, 12(4):449--461, 1992.

\bibitem[NW94]{NW94}
N.~Nisan and A.~Wigderson.
\newblock Hardness vs randomness.
\newblock {\em J. Comput. Syst. Sci.}, 49(2):149--167, 1994.

\bibitem[Smo93]{S93}
R.~Smolensky.
\newblock On representations by low-degree polynomials.
\newblock In {\em FOCS}, pages 130--138. IEEE, 1993.

\bibitem[STV01]{STV01}
M.~Sudan, L.~Trevisan, and S.~P. Vadhan.
\newblock Pseudorandom generators without the {XOR} lemma.
\newblock {\em J. Comput. Syst. Sci.}, 62(2):236--266, 2001.

\bibitem[Wat00]{W00}
J.~Watrous.
\newblock Succinct quantum proofs for properties of finite groups.
\newblock In {\em FOCS}, pages 537--546, 2000.

\bibitem[Yao82]{Y82}
A.~Yao.
\newblock Theory and applications of trapdoor functions (extended abstract).
\newblock In {\em FOCS}, pages 80--91. IEEE, 1982.

\end{thebibliography}
\end{document}